\documentclass{cccg26}
\usepackage{graphicx,amssymb,amsmath}
\usepackage[T1]{fontenc}
\usepackage{booktabs}
\usepackage{subcaption}
\usepackage{tikz}
\usetikzlibrary{angles,quotes}
\usetikzlibrary{decorations.pathreplacing,calligraphy,arrows.meta}
\usepackage{mathtools}
\usepackage{microtype}
\usepackage{comment}
\usepackage{todonotes}
\usepackage[compact]{titlesec}
\titlespacing*{\section}      {0pt}{4pt plus 1pt minus 1pt}{2pt plus 1pt minus 1pt}
\titlespacing*{\subsection}   {0pt}{3pt plus 1pt minus 1pt}{1pt plus 1pt minus 1pt}
\titlespacing*{\subsubsection}{0pt}{2pt plus 1pt minus 1pt}{1pt plus 1pt minus 1pt}
\titlespacing*{\paragraph}    {0pt}{2pt plus 1pt minus 1pt}{0.5em}
\usepackage{etoolbox}
\AtEndEnvironment{lemma}{\vspace{-6pt}}
\AtEndEnvironment{theorem}{\vspace{-6pt}}
\AtEndEnvironment{proof}{\vspace{-6pt}}
\newtheorem{observation}{Observation}
\newtheorem{remark}{Remark}
\newtheorem{definition}{Definition}

\title{Guaranteed Escape for a Bouncing Robot in Pipe Chains} 

\author{Yeganeh Bahoo\thanks{\raggedright Toronto Metropolitan University, Toronto, ON M5B~2K3, Canada, \texttt{\{bahoo,ahmad.kamaludeen,somnath.kundu\}@torontomu.ca}}
	\and
	Ahmad Kamaludeen\footnotemark[1]
	\and
	Somnath Kundu\footnotemark[1]}
\index{Bahoo, Yeganeh}
\index{Kundu, Somnath}
\index{Kamaludeen, Ahmad}

\begin{document}
\thispagestyle{empty}
\maketitle

\begin{abstract}
We study the symmetric bouncing of a point robot within orthogonally-joined rectangles with equal width, which we refer to as pipes. We provide an exhaustive case analysis of every trajectory pattern inside a single rectangular pipe segment, identifying the conditions under which the robot exits. We then extend the analysis to L-shaped pipes and, more generally, to linear chains of $k$ orthogonally connected pipe segments. We prove exit guarantees for the special angle $\alpha = \pi/4$. Furthermore, these results extend to pipes with curved joints. 
\end{abstract}

\let\thefootnote\relax\footnotetext{We acknowledge the support of the Natural Sciences and Engineering Research Council of Canada (NSERC).}

\section{Introduction}
A bouncing robot is a mobile robot that moves in straight lines and reflects off obstacles like a billiard ball. We consider such a robot within an orthogonal pipe: an environment composed of axis-aligned rectangular corridors connected orthogonally, where their openings are placed end to end (at the corners of the each rectangle in the pipe), so that there exactly two openings in the entire composite polygon; see Figure~\ref{fig:chain}. The robot has no sensors to detect openings except by physically reaching them with an angle $\alpha > 0$. When it hits a wall, it bounces symmetrically; if it hits a corner, it stops. We assume there is an opening of length equal to the pipe's width on the boundary through which the robot can escape. The central problem is to decide how the robot should move so that it is guaranteed to find the exit.

This problem connects with billiards and dynamical systems~\cite{Boldrighini1978,Tokarsky1995,gutkin1986billiards}. In robotics, bouncing robots have been explored as minimalistic agents for coverage and exploration, relevant to sensor-denied environments such as pipes and ducts~\cite{BlindBouncingRobots}. Kundu et al.~\cite{Kundu2023RectRoom} presented an algorithm guaranteeing escape from a single rectangle under co-prime side lengths. This result is extended to arbitrary side lengths by Kundu et al. in~\cite{Kundu2024Escape}. Escaping orthogonal pipe systems, however, remained open.

We start with an exhaustive case analysis of every trajectory pattern inside the End Block of a rectangular pipe segment (Table~\ref{tab:robot_cases_final}), which we define as the unit square at the pipes closed end, which one edge of which is the exit, and we identify the exact conditions on the initial angle $\alpha$ and position $y_0$ under which the robot exits. This enumeration is the technical foundation for handling junctions between pipe segments. We apply it to characterize the trajectories in L-shaped pipes (two corridors joined orthogonally) and extend to \emph{linear chains} of $k$ orthogonally connected segments, proving exit guarantees under various conditions on the angle and length of the segments.

\section{Background and Related Work}

\begin{definition}[Symmetric Bounce]\label{def:symmetric_bounce}
When a robot collides with a wall, it undergoes a \emph{symmetric bounce}. The \emph{incident angle} is the angle between the incoming trajectory and the wall's normal; the \emph{reflected angle} is the angle between the outgoing trajectory and the wall's normal. A symmetric bounce requires the incident angle to equal the reflected angle, following the classical specular reflection law.
\end{definition}

\paragraph{Robot model.}
We model the robot as a \emph{point} moving at unit speed in a two-dimensional orthogonal pipe environment with no interior obstacles. The robot has no sensors to detect openings or measure distances; its only interaction with the environment is through collisions with the boundary. The robot traverses a straight-line trajectory until it collides with a wall or corner. Upon hitting a corner, the robot terminates its motion. When colliding with a wall, the robot undergoes a symmetric bounce (Definition~\ref{def:symmetric_bounce}). Following reflection, the robot resumes straight-line motion along the new trajectory determined by the reflection angle until it encounters another wall or corner.
This behaviour follows the classical billiard reflection law~\cite{gutkin1986billiards}.

\paragraph{The unfolding method.}
A key concept in analyzing bouncing trajectories is the \emph{unfolding method}~\cite{TabachnikovBilliards}: instead of reflecting the robot's path at each wall, one reflects the \emph{rectangle} across the wall, producing an infinite tiling of the plane by copies of the rectangle. In this unfolded plane, the robot's trajectory becomes a single straight line, and openings on the boundary correspond to periodically spaced ``holes'' in the tiling. See Figure~\ref{fig:unfold} and Figure~\ref{fig:tiling} in Appendix B for reference. The robot escapes if and only if its trajectory intersects one of the holes. This is closely related to classical equidistribution results for billiard trajectories~\cite{Masur1982Ergodic}.

\paragraph{Related work.}
The foundational paper~\cite{BlindBouncingRobots} formalized bouncing robots as minimalistic agents that interact with the environment only through collisions, establishing bouncing as a viable primitive for coverage and navigation. More recently,~\cite{WildBodies} studied how uncontrolled or partially controlled dynamical systems can be steered by exploiting boundary interactions; these strategies can be viewed through the lens of universal plans~\cite{TimLavLav}.
Building on the unfolding method, Kundu et al.~\cite{Kundu2023RectRoom} proved that for co-prime side lengths, a carefully chosen angle ensures consecutive bounce distances never exceed~$1$, so any unit-length gap is detected; \cite{Kundu2024Escape} subsequently removed the co-primality requirement. Beyond single rooms,~\cite{Kundu2022Rectilinear} showed that a bouncing robot with constant memory can systematically traverse all corridors of a rectilinear polygon.
Nilles et al.~\cite{Nilles2021Nondeterministic} relaxed deterministic reflection, analyzing \emph{nondeterministic} bouncing with angular uncertainty. Visibility with reflection in orthogonal polygons~\cite{Aronov1995VisibilityReflection,Biro1992Illumination} and specular-reflection visibility in general polygons~\cite{Vaezi2025Visibility} provide complementary geometric intuition for why repeated reflections can effectively search an environment.
The same specular-reflection dynamics underlie classical optics problems: O'Rourke and Petrovici~\cite{ORourke2001Narrowing} study how a beam, which moves symmetrically like the robot, can be progressively narrowed by a sequence of mirrors, and Eppstein~\cite{Eppstein2022Reflections} shows that the eventual fate of a light ray in an axis-parallel-and-diagonal ``mirror maze'' is decidable in polynomial time. These results are close in spirit to our setting, where the pipe walls act as mirrors and the question is whether a ray reaches a designated opening; the diagonal-mirror analysis of~\cite{Eppstein2022Reflections} is especially relevant to the $\alpha=\pi/4$ model.
Our analysis builds on~\cite{Kundu2024Escape} but differs in focus: rather than constructing a global escape strategy, we provide an exhaustive End Block case analysis (Table~\ref{tab:robot_cases_final}) that is the technical foundation for handling pipe junctions, which is not addressed in prior work.
Our analysis builds on~\cite{Kundu2024Escape} but differs in focus: rather than constructing a global escape strategy, we provide an exhaustive End Block case analysis (Table~\ref{tab:robot_cases_final}) that is the technical foundation for handling pipe junctions, which is not addressed in prior work. 

\section{Problem Definition and Single-Segment Analysis}
\label{sec:terminology}

The pipe system consists of pipes interconnected orthogonally at their endpoints, forming a cohesive network referred to simply as a ``pipe chain''. The two terminal pipes each feature an opening, thereby providing the entire system with exactly two openings, one at each end.

The robot enters the system via one of these openings. The process terminates if the robot exits through either opening. The robot's objective is to traverse the system by entering through one opening and exiting through the other.

We model the robot as a point undergoing symmetric bounces (Definition~\ref{def:symmetric_bounce}) in a two-dimensional rectangular pipe of dimensions $L \times 1$ (scaled so the pipe width is~$1$), with $L \geq 2$.

We consider a pipe with an opening on the left side and an opening on the bottom at the right end (Figure~\ref{fig:bounce}). Let $l_0$ denote the horizontal distance from the left opening to the left edge of the right opening. The robot starts at height $y_0 \in [0,1]$ with angle $\alpha \in (0, \pi/2)$ measured clockwise from the vertical. Between consecutive bounces on opposite walls, the robot travels a horizontal distance of $\tan\alpha$.

\begin{figure}[h]
\centering
\begin{tikzpicture}[scale=0.55]
    \coordinate (LB) at (1,0);
    \coordinate (OL) at (9,0);
    \coordinate (OT) at (9,2);
    \coordinate (RB) at (11,0);
    \coordinate (LT) at (1,2);
    \coordinate (RT) at (11,2);
    \coordinate (LD) at (1,2.5);
    \coordinate (OD) at (9,2.5);
    \coordinate (RD) at (11,2.5);
    \coordinate (RTD) at (11.5,2);
    \coordinate (RBD) at (11.5,0);
    \coordinate (B) at (0,0);
    \coordinate (S) at (1,1);
    \fill[gray!15] (LB) rectangle (RT);
        \draw [ultra thick] (LB)--(OL);
        \draw [dashed](OL)--(RB);
        \draw [ultra thick] (LT)--(RT);
        \draw [ultra thick] (RT)--(RB);
        \draw [dashed](LT)--(LB);
        \draw [dashed](OT)--(OL);
        \draw [dashed] (B)--(LB);
        \draw[|-|] (LD)--(OD);
        \draw[|-|] (OD)--(RD);
        \draw[|-|] (RTD)--(RBD);
        \node at (4.5, 2.75) {\small $l_0$};
        \node at (10, 2.75) {\small $1$};
        \node at (11.75, 1) {\small $1$};
        \node[draw, fill, circle, inner sep=1pt] at (B) {};
        \node[draw, fill, circle, inner sep=1pt] at (S) {};
        \draw [dashed][thick][-stealth][blue] (B)--(S);
        \draw [thick][-stealth][blue] (S)--(2,2);
        \draw [thick][-stealth][blue] (2,2)--(4,0);
        \draw [dashed][thick][-stealth][blue] (4,0)--(5,1);
        \draw [dashed][thick][-stealth][blue] (5.5,1)--(6.5,0);
        \draw [thick][-stealth][blue] (6.5,0)--(8.5,2);
        \draw [thick][-stealth][blue] (8.5,2)--(9.5,1);
        \draw[|-|] (0,-0.75)--(1,-0.75);
        \node at (0.5,-1.5) {\small $y_0 \tan \alpha$};
        \draw[|-|] (6.5,-0.75)--(9,-0.75);
        \node at (7.75,-0.5) {\small $\epsilon_B$};
        \draw[|-|] (8.5,2.9)--(9,2.9);
        \node at (8.75,3.25) {\small $\epsilon_T$};
        \pic [draw, "$\alpha$", angle eccentricity=1.5] {angle = B--S--LB};
\end{tikzpicture}
\caption{A robot (blue arrows) symmetrically bouncing in a rectangular pipe.}
\label{fig:bounce}
\end{figure}

We define the \emph{End Block} (EB) as the unit square $[l_0,\, l_0+1]\times[0,1]$ at the right end of the pipe, whose bottom edge is the exit. We write $\operatorname{Rem}(a,b) = a - b\lfloor a/b\rfloor$. The horizontal distance from the last bounce on the bottom wall before the EB to the EB's left boundary is
$\epsilon_B = \operatorname{Rem}(y_0 \tan\alpha + l_0,\; 2\tan\alpha)$,
and the last bounce on the top wall is
$\epsilon_T = \operatorname{Rem}(\frac{(l_0 - (1-y_0)\tan\alpha)}{2\tan\alpha})$. See Figure \ref{fig:bounce}.
Since two adjacent sides of the EB are open, the robot can make at most two reflections within the EB before exiting or returning.

\begin{remark}\label{rmk:special_case}
The case $\alpha = \pi/4$ ($\tan\alpha = 1$) guarantees exit from any single rectangular pipe segment~\cite{Kundu2024Escape} and, consequently, from any pipe chain (Theorem~\ref{thm:pi4_chain}). It is therefore treated as a guaranteed-exit case.
\end{remark}

By Remark~\ref{rmk:special_case}, the case $\alpha = \pi/4$ is a guaranteed-exit case and excluded from the conditional classification in Table~\ref{tab:robot_cases_final}.

\subsection{End Block Classification}

The following theorem, whose proof appears in the Appendix, enumerates all possible trajectories inside the EB (see Figure~\ref{fig:cases}).

\begin{theorem}[Complete classification]\label{thm:classification}
Assume the robot has entered the End Block from either the bottom or the top.
Then exactly one of the nine bouncing patterns in Table~\ref{tab:robot_cases_final} occurs.
Moreover, the sequence Bottom $\to$ Side $\to$ Top $\to$ Exit (Case~4) is impossible.
\end{theorem}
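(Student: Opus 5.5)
We reduce everything to elementary plane geometry inside the unit square $[l_0,l_0+1]\times[0,1]$. By the setup, the EB's left side is open (it continues into the pipe), its bottom side is the exit, and its top and right sides are walls. The robot enters across the left side, moving rightward with direction of slope $\pm\cot\alpha$. Its state at entry is fixed by the entry height, which is determined by $\epsilon_B$ (last bounce on the bottom) or $\epsilon_T$ (last bounce on the top), together with the vertical sense of motion: downward after a top bounce, upward after a bottom bounce. Inside the square we follow a straight segment from the entry point. It either hits the top wall, the right wall, the bottom exit, or a corner; after a wall hit, the horizontal or vertical velocity component flips. Each step is decided by comparing the remaining horizontal room with the horizontal distance needed to reach the next horizontal line, which is a linear inequality in $\epsilon_B$ (or $\epsilon_T$) and $\tan\alpha$.

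\textbf{Step 1 (finiteness).} The horizontal velocity is rightward until the robot hits the right wall, and leftward afterwards. The vertical velocity flips only at the top. The robot can therefore hit the right wall at most once and the top at most twice: once before the side and once after. Each top hit reverses the vertical direction, so between two top hits the robot would have to reach the bottom, which is the exit. Hence at most two reflections occur, as the paper notes, and every trajectory ends with one of: exit through the bottom, leaving through the left side (return), or a corner stop. This gives a finite tree whose root is the entry side (bottom or top) and whose branches are labelled by words over $\{\text{Top},\text{Side}\}$ of length at most two, each followed by Exit or Return.

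\textbf{Step 2 (enumeration).} We list the leaves of this tree and attach to each its defining system of strict inequalities, with equalities giving the corner-hit cases. For bottom entry, the height at which the robot reaches $x=l_0$ is $\epsilon_B\cot\alpha$, and it moves upward. We then compute, for example, whether it reaches $y=1$ before $x=l_0+1$, namely whether $(1-\epsilon_B\cot\alpha)\tan\alpha<1$. Top entry is handled symmetrically using $\epsilon_T$. The resulting cells form a partition of the parameter domain $(\epsilon,\alpha)$, since at each branching node exactly one of the strict inequalities or the equality holds. This yields that exactly one pattern occurs, and these cells are matched with the nine rows of Table~\ref{tab:robot_cases_final}.

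\textbf{Step 3 (impossibility of Case~4).} After a bottom entry the robot moves up and to the right. If it hits the Side first, it then moves up and to the left; hitting the Top turns it down and to the left. After Side then Top it has used its single top reflection for this descending leg and is heading down-left. It must therefore reach either $y=0$ (exit) or $x=l_0$ (return). The claimed Case~4 sequence Bottom $\to$ Side $\to$ Top $\to$ Exit would require reaching the bottom within the square. The horizontal travel to get from the top back down to the bottom is $\tan\alpha$. The leftward travel available from the right wall is at most $1$, and part of it, $(1-y_{\text{side}})\tan\alpha>0$, has already been spent going up to the top. Exiting would thus need $(1-y_{\text{side}})\tan\alpha+\tan\alpha\le 1$. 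Meanwhile, reaching the side before the top from a bottom entry forces $\tan\alpha\cdot(\text{vertical rise})\ge$ the horizontal remainder. Combining these inequalities should yield a contradiction, using that the total horizontal path in the square is $\le 2$ while the vertical path is $\ge 2$, which would force $\tan\alpha\le 1$ in a way incompatible with entry from the bottom. We must exclude $\tan\alpha=1$ separately, as Remark~\ref{rmk:special_case} does.

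Making this contradiction tight is the main obstacle. We will instead unfold the square across the right wall, so that the path becomes a straight line in $[l_0,l_0+2]\times[0,1]$ reflected only at the top. We then check directly that a line entering at the bottom-left cannot touch the top in the right half and the bottom again within width $2$ unless $\tan\alpha$ is small enough that it would have hit the top before the side. This contradicts the Side-first assumption.
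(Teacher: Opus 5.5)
You follow the paper's route: a finite decision tree inside the End Block whose branch points are linear inequalities in $\tan\alpha$ and $\epsilon_B$. Your test $(1-\epsilon_B\cot\alpha)\tan\alpha<1$ is exactly the condition $\tan\alpha<1+\epsilon_B$ of Lemma~\ref{lemma:first_collision}. Your Step~1 also usefully justifies the ``at most two reflections'' claim that the paper only asserts.

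The gap is that you never actually prove the one non-routine assertion, the impossibility of Case~4. Your first attempt rests on a false estimate. Inside the square, a Bottom$\to$Side$\to$Top$\to$Exit path has vertical travel $(1-\epsilon_B\cot\alpha)+1=2-\epsilon_B\cot\alpha$, not $\ge 2$. You also aim the contradiction at the wrong hypothesis: bottom entry only says $\epsilon_B\le\tan\alpha$, which is compatible with $\tan\alpha\le 1$.

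With the correct vertical travel, your bookkeeping does work. The horizontal travel in the square is $(2-\epsilon_B\cot\alpha)\tan\alpha=2\tan\alpha-\epsilon_B$. Landing inside the square forces this to be at most $2$, i.e.\ $\tan\alpha\le 1+\epsilon_B/2$. Side-first, however, means $\tan\alpha>1+\epsilon_B\ge 1+\epsilon_B/2$, a contradiction.

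Your unfolding fallback is also the right idea; it is exactly the paper's landing-coordinate computation. But you stop at ``we then check directly'', and the check is short:
\begin{itemize}
\item the top is reached at offset $x_{\mathrm{top}}=\tan\alpha-\epsilon_B$ from $x=l_0$, and Side-first means $x_{\mathrm{top}}>1$;
\item the descent costs the full $\tan\alpha\ge x_{\mathrm{top}}$, so the next bottom contact is at unfolded offset $2\tan\alpha-\epsilon_B\ge 2x_{\mathrm{top}}>2$;
\item folding back gives $x_{\mathrm{final}}=2+\epsilon_B-2\tan\alpha<0$, so the robot leaves through the left side of the EB (Case~5 or~6) before reaching the exit.
\end{itemize}

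Two smaller points. First, no separate treatment of $\tan\alpha=1$ is needed, because the Side-first inequality is strict and $\epsilon_B\ge 0$. Your appeal to Remark~\ref{rmk:special_case} there is an artefact of the flawed estimate.

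Second, Step~2 asserts rather than derives the match with Table~\ref{tab:robot_cases_final}. You should record the actual thresholds:
\begin{itemize}
\item $\tan\alpha=1+\epsilon_B$ for Top-first versus Side-first;
\item $2\tan\alpha=1+\epsilon_B$ for exit versus side after a top bounce, as in Lemma~\ref{lemma:top_sequences};
\item $\tan\alpha=2+\epsilon_B$ for Cases~5 versus~6;
\item $\tan\alpha$ versus $1+\epsilon_B/2$ for Cases~2 versus~3, from the same folded landing point.
\end{itemize}
You also need to decide where the equality cases go. As you set things up, corner stops are extra leaves that belong to none of the nine rows. So ``exactly one of the nine patterns'' holds only after you absorb them into neighbouring cells, as the table's non-strict inequalities do.
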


\begin{table}[h]
\centering
\caption{End Block bounce classification.}
\label{tab:robot_cases_final}
\renewcommand{\arraystretch}{1.2}
\footnotesize
\begin{tabular}{@{}clcc@{}}
\toprule
\textbf{Case} & \textbf{Sequence} & \textbf{Exit?} & \textbf{Condition ($t = \tan\alpha$)}  \\
\midrule
1 & Bot$\to$Top & \checkmark & $t \le (1{+}\epsilon_B)/2$, \, $\alpha < \pi/4$ \\
2 & Bot$\to$Top$\to$Side & \checkmark & $(1{+}\epsilon_B)/2 < t \le 1{+}\epsilon_B$ \textsuperscript{$\dagger$} \\
3 & Bot$\to$Top$\to$Side & $\times$ & $(1{+}\epsilon_B)/2 < t \le 1{+}\epsilon_B$ \textsuperscript{$\dagger$} \\
4 & Bot$\to$Side$\to$Top & --- & \textbf{Impossible} \\
5 & Bot$\to$Side$\to$Top & $\times$ & $1{+}\epsilon_B < t \le 2{+}\epsilon_B$ \\
6 & Bot$\to$Side & $\times$ & $t > 2{+}\epsilon_B$ \\
7 & Top$\to$Side & $\times$ & $t \le (1{+}\epsilon_T)/2$ \\
8 & Top$\to$Side & \checkmark & $t > (1{+}\epsilon_T)/2$ \\
9 & Top$\to$(direct) & \checkmark & Depends on $x$-position and exit \\
\bottomrule
\end{tabular}
{\scriptsize $\dagger$\,Cases 2 \& 3 share the same $\alpha$-range; exit depends on starting point.}
\end{table}

The proof of Theorem~\ref{thm:classification} relies on two supporting lemmas. Lemma~\ref{lemma:first_collision} determines whether the first bounce is on the top wall ($\tan\alpha \le 1 + \epsilon_B$) or the side wall ($\tan\alpha > 1 + \epsilon_B$). Lemma~\ref{lemma:top_sequences} then determines, when the first bounce is on the top wall, whether the path exits immediately ($2\tan\alpha \le 1 + \epsilon_B$) or hits the side wall ($2\tan\alpha > 1 + \epsilon_B$). The impossibility of Case~4 is proved by showing the required conditions are contradictory. Full proofs appear in the Appendix.
\begin{lemma}[First Bounce]\label{lemma:first_collision}
The first bounce is on the top wall iff $\tan\alpha \le 1 + \epsilon_B$; on the right side wall iff $\tan\alpha > 1 + \epsilon_B$.
\end{lemma}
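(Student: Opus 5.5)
The plan is to track the single straight segment that the robot follows after its last bottom bounce before the End Block, and compare the horizontal coordinate at which it would reach height $1$ with the horizontal coordinate of the right wall. We work in the pipe coordinates of Figure~\ref{fig:bounce}, with the End Block equal to $[l_0,l_0+1]\times[0,1]$. By the definition of $\epsilon_B$, the last bounce on the bottom wall before the EB is at the point $(l_0-\epsilon_B,0)$. From there the robot moves up and to the right: its vertical speed component is $\cos\alpha$ and its horizontal component is $\sin\alpha$, so each unit of height gained costs a horizontal displacement of $\tan\alpha$.

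First I will justify that this segment is the one along which the robot enters the EB. I read the hypothesis ``entered the End Block from the bottom'' of Theorem~\ref{thm:classification} as saying that the ascending segment leaving $(l_0-\epsilon_B,0)$ crosses the line $x=l_0$ before reaching the top wall. Equivalently, $0\le \epsilon_B<\tan\alpha$, which uses $\operatorname{Rem}(\cdot,2\tan\alpha)\in[0,2\tan\alpha)$. If instead $\epsilon_B \ge \tan\alpha$, the robot would bounce off the top before $x=l_0$ and enter from the top, which is the $\epsilon_T$ regime.

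Inside the EB the bottom edge is the exit and the top edge and the right edge are walls. Since the segment is ascending, it cannot meet the bottom edge before its first collision. The segment is parametrized as $(l_0-\epsilon_B+s\tan\alpha,\; s)$ for $s\ge 0$. It reaches the line $y=1$ at $x_{\mathrm{top}}=l_0-\epsilon_B+\tan\alpha$, and it reaches the line $x=l_0+1$ at height $s^*=(1+\epsilon_B)/\tan\alpha$. Thus the first obstacle is the top wall exactly when $x_{\mathrm{top}}\le l_0+1$, that is, when $\tan\alpha\le 1+\epsilon_B$. Equivalently $s^*\ge 1$, so the segment attains height $1$ no later than it attains $x=l_0+1$. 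Conversely, when $\tan\alpha>1+\epsilon_B$ we get $s^*<1$, so the right wall is met at a height strictly between $0$ and $1$. This height is positive because $\epsilon_B\ge 0$ gives $s^*>0$. The two conditions are complementary, so both ``iff'' statements follow together.

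The main obstacle is not the computation but the boundary bookkeeping. The equality case $\tan\alpha=1+\epsilon_B$ means the robot hits the corner $(l_0+1,1)$, and under the robot model it then stops. I would state explicitly that this case is grouped with the top-wall case, matching the non-strict inequality in the lemma and in Table~\ref{tab:robot_cases_final}. The second delicate point is making the entry assumption $\epsilon_B<\tan\alpha$ explicit. Without it, the segment from $(l_0-\epsilon_B,0)$ might hit the top wall left of the EB, and ``first bounce'' would not refer to a bounce inside the EB. With both points settled, the lemma reduces to the single linear comparison above.
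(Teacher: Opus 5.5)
Your proof is correct and is essentially the paper's argument: both place the last bottom bounce at horizontal distance $\epsilon_B$ to the left of the End Block, find the height-$1$ crossing at offset $\tan\alpha-\epsilon_B$ from the End Block's left edge, and compare it with the right wall, giving $\tan\alpha\le 1+\epsilon_B$. Your additions are bookkeeping that the paper leaves implicit: you make the bottom-entry condition $\epsilon_B<\tan\alpha$ explicit (the paper states it only as $\epsilon_B\le\tan\alpha$ in the full classification proof), and you flag that the equality case $\tan\alpha=1+\epsilon_B$ is really a corner hit.
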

\vspace{-2mm}
\begin{lemma}[Top-First Sequences]\label{lemma:top_sequences}
If the first bounce is on the top wall, the path exits immediately (Case~1) iff $2\tan\alpha \le 1 + \epsilon_B$, and hits the side wall (Cases~2\,\&\,3) iff $2\tan\alpha > 1 + \epsilon_B$.
\end{lemma}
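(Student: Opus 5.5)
Fix coordinates so the End Block is $[l_0,l_0+1]\times[0,1]$ with $x$ measured from its left boundary. We are in the top-first situation of Lemma~\ref{lemma:first_collision}, so the last bottom bounce is at $x=-\epsilon_B$, and the robot moves up and to the right with horizontal displacement $\tan\alpha$ per unit of vertical travel. By hypothesis (that is, by Lemma~\ref{lemma:first_collision}) the next wall hit is the top, at $x_1=\tan\alpha-\epsilon_B$. We need $x_1\ge 0$ for this hit to lie on the End Block's top edge rather than on the pipe's top before the block. I would get this from the definition of $\epsilon_B$ as the distance from the \emph{last} bottom bounce before the block, which forces $\epsilon_B<2\tan\alpha$, and would state the assumption $x_1\in[0,1]$ explicitly.

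After reflecting off the top wall, the robot travels down and to the right, again with horizontal displacement $\tan\alpha$ per unit of vertical travel. If nothing intervenes, it returns to height $0$ at $x_2=x_1+\tan\alpha=2\tan\alpha-\epsilon_B$. The bottom edge of the End Block, $x\in[0,1]$, is open (the exit), and the right side wall is at $x=1$. So the robot exits immediately exactly when its descending segment reaches $y=0$ with $x_2\le 1$, i.e.\ $2\tan\alpha\le 1+\epsilon_B$. If instead $x_2>1$, the descending segment crosses the line $x=1$ at height $1-(1-x_1)/\tan\alpha\in(0,1)$, which lies strictly before reaching the bottom, so it strikes the side wall.

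The two conditions are complementary, which gives both ``iff'' directions simultaneously. The boundary case $x_2=1$ hits the corner where the exit meets the side wall. By the paper's convention the robot stops at corners, so I would assign this case to Case~1 to match the non-strict inequality, with a remark.

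The main obstacle is justifying that no other wall is met along the descending segment. This requires $x_1\le 1$, which follows from the first-bounce condition $\tan\alpha\le 1+\epsilon_B$ of Lemma~\ref{lemma:first_collision}. It also requires $x_1\ge 0$, so that the pipe's bottom wall to the left of the opening plays no role. Once $0\le x_1\le 1$ is secured, the remaining step is a one-line linear computation.
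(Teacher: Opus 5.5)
Your proposal is correct and follows essentially the same route as the paper: you compute the landing point $x_2=2\tan\alpha-\epsilon_B$ after the top bounce and compare it with the side wall at $x=1$, and your handling of the intermediate top hit and the corner case $x_2=1$ is extra care the paper omits. One small fix: $\epsilon_B<2\tan\alpha$ only gives $x_2>0$, which is what actually keeps the bottom wall left of the opening out of play, whereas $x_1\ge 0$ is the bottom-entry condition $\epsilon_B\le\tan\alpha$ and must come from the hypothesis that the first bounce inside the End Block is on the top wall.
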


\begin{figure*}[t]
\centering
\tikzset{
    pipe/.style={thick, black},
    dashedline/.style={dashed, thin, black},
    traj/.style={thick, -stealth, blue},
    success/.style={thick, -stealth, green!60!black},
    fail/.style={thick, -stealth, red},
    ghost/.style={dashed, thin, red},
    anglelabel/.style={draw, "$\alpha$", angle radius=0.6cm, angle eccentricity=1.3, font=\scriptsize}
}
\def\DrawPipe{
    \def\L{4} \def\H{1.5} \def\End{5.5}
    \fill[gray!15] (0,0) rectangle (\End,\H);
    \draw[pipe] (0,\H) -- (\End,\H);
    \draw[pipe] (0,0) -- (\L,0);
    \draw[dashedline] (\L,0) -- (\End,0);
    \draw[dashedline] (0,0) -- (0,\H);
    \draw[pipe] (\End,0) -- (\End,\H);
    \draw[dashedline] (\L,0) -- (\L,\H);
    \node at (2.5, \H+0.25) {\small $l_0$};
   
}
\begin{subfigure}[t]{0.32\textwidth}
\centering
\begin{tikzpicture}[scale=0.55]
    \DrawPipe
    \coordinate (Start) at (0.25,0); \coordinate (VertUp) at (0.25,1);
    \coordinate (Top1) at (1.1,1.5); \coordinate (Bot1) at (1.85,0);
    \coordinate (Top2) at (2.6,1.5); \coordinate (Bot2) at (3.35,0);
    \coordinate (TopEB) at (4.15,1.5); \coordinate (Exit) at (4.9,0);
    \draw[dashedline] (Start)--(VertUp);
    \draw[traj] (Start)--(Top1)--(Bot1)--(Top2)--(Bot2)--(TopEB);
    \draw[success] (TopEB)--(Exit);
    \pic[anglelabel]{angle=Top1--Start--VertUp};
\end{tikzpicture}
\caption{Case 1: Bot, Top, Exit}
\end{subfigure}
\hfill
\begin{subfigure}[t]{0.32\textwidth}
\centering
\begin{tikzpicture}[scale=0.55]
    \DrawPipe
    \coordinate (Start) at (0.15,0); \coordinate (VertUp) at (0.15,1);
    \coordinate (Top1) at (1.61,1.5); \coordinate (Bot1) at (3.27,0);
    \coordinate (TopEB) at (4.93,1.5); \coordinate (Side) at (5.5,0.99);
    \coordinate (Exit) at (4.4,0);
    \draw[dashedline] (Start)--(VertUp);
    \draw[traj] (Start)--(Top1)--(Bot1)--(TopEB)--(Side);
    \draw[success] (Side)--(Exit);
    \pic[anglelabel]{angle=Top1--Start--VertUp};
\end{tikzpicture}
\caption{Case 2: Bot, Top, Side, Exit}
\end{subfigure}
\hfill
\begin{subfigure}[t]{0.32\textwidth}
\centering
\begin{tikzpicture}[scale=0.55]
    \DrawPipe
    \coordinate (Start) at (0.5,0.5); \coordinate (VertUp) at (0.5,2.0);
    \coordinate (TopEB) at (4.1,1.5); \coordinate (Side) at (5.5,1.1);
    \coordinate (Miss) at (1.65,0);
    \draw[dashedline] (Start)--(VertUp);
    \draw[traj] (Start)--(TopEB)--(Side);
    \draw[fail] (Side)--(Miss);
    \pic[anglelabel, pic text options={shift={(0pt,4pt)}}]{angle=TopEB--Start--VertUp};
\end{tikzpicture}
\caption{Case 3: Bot, Top, Side, Miss}
\end{subfigure}

\begin{subfigure}[t]{0.32\textwidth}
\centering
\begin{tikzpicture}[scale=0.55]
    \DrawPipe
    \coordinate (Start) at (0.5,0.5); \coordinate (VertUp) at (0.5,2.0);
    \draw[dashedline] (Start) -- (VertUp);
    \coordinate (Side) at (5.5,1.0);
    \draw[traj] (Start) -- (Side);
    \pic[draw, "$\alpha_1$", angle radius=0.6cm, angle eccentricity=1.3, font=\scriptsize, pic text options={shift={(0pt,4pt)}}]{angle = Side--Start--VertUp};
    \coordinate (VertUpSide) at (5.5,2.0);
    \draw[dashedline] (Side) -- (VertUpSide);
    \coordinate (TopHyp) at (4.2,1.5); \coordinate (ExitHyp) at (4.6,0.0);
    \draw[ghost] (Side) -- (TopHyp);
    \draw[ghost] (TopHyp) -- (ExitHyp);
    \draw[red, thick] (Side) ++(0,0.6) arc[start angle=90, end angle=160, radius=0.6];
    \path (VertUpSide) ++(20:0) node[red, yshift=4pt] {$\alpha_2$};
\end{tikzpicture}
\caption{Case 4: Impossible}
\end{subfigure}
\hfill
\begin{subfigure}[t]{0.32\textwidth}
\centering
\begin{tikzpicture}[scale=0.55]
    \DrawPipe
    \coordinate (Start) at (0.5,0.5); \coordinate (VertUp) at (0.5,2.0);
    \coordinate (Top1) at (2.2,1.5); \coordinate (Bot1) at (3.9,0);
    \coordinate (Side) at (5.5,1.2); \coordinate (Top2) at (5.1,1.5);
    \coordinate (Miss) at (3.1,0);
    \draw[dashedline] (Start)--(VertUp);
    \draw[traj] (Start)--(Top1)--(Bot1)--(Side)--(Top2);
    \draw[fail] (Top2)--(Miss);
    \pic[anglelabel]{angle=Top1--Start--VertUp};
\end{tikzpicture}
\caption{Case 5: Bot, Side, Top, Miss}
\end{subfigure}
\hfill
\begin{subfigure}[t]{0.32\textwidth}
\centering
\begin{tikzpicture}[scale=0.55]
    \DrawPipe
    \coordinate (Start) at (0.5,0.5); \coordinate (VertUp) at (0.5,2.0);
    \coordinate (Side) at (5.5,0.8); \coordinate (Miss) at (0.5,1.1);
    \draw[dashedline] (Start)--(VertUp);
    \draw[traj] (Start)--(Side);
    \draw[fail] (Side)--(Miss);
    \pic[anglelabel, pic text options={shift={(0pt,4pt)}}]{angle=Side--Start--VertUp};
\end{tikzpicture}
\caption{Case 6: Bot, Side, Miss}
\end{subfigure}

\begin{subfigure}[t]{0.32\textwidth}
\centering
\begin{tikzpicture}[scale=0.55]
    \DrawPipe
    \coordinate (StartTop) at (1.0,1.5); \coordinate (VertUp) at (1.0,2.1);
    \coordinate (Side) at (5.5,0.825); \coordinate (Miss) at (0.0,0.0);
    \draw[dashedline] (StartTop)--(VertUp);
    \draw[traj] (StartTop)--(Side);
    \draw[fail] (Side)--(Miss);
    \draw[dashed, red, ->] (Miss)--(-0.3,0.3);
    \pic[anglelabel]{angle=Side--StartTop--VertUp};
\end{tikzpicture}
\caption{Case 7: Top, Side, Miss}
\end{subfigure}
\hfill
\begin{subfigure}[t]{0.32\textwidth}
\centering
\begin{tikzpicture}[scale=0.55]
    \DrawPipe
    \coordinate (StartTop) at (3.0,1.5); \coordinate (VertUp) at (3.0,2.1);
    \coordinate (Side) at (5.5,0.27); \coordinate (Exit) at (5,0.0);
    \draw[dashedline] (StartTop)--(VertUp);
    \draw[traj] (StartTop)--(Side);
    \draw[success] (Side)--(Exit);
    \pic[anglelabel]{angle=Side--StartTop--VertUp};
\end{tikzpicture}
\caption{Case 8: Top, Side, Exit}
\end{subfigure}
\hfill
\begin{subfigure}[t]{0.32\textwidth}
\centering
\begin{tikzpicture}[scale=0.55]
    \DrawPipe
    \coordinate (StartTop) at (3.5,1.5); \coordinate (VertUp) at (3.5,2.1);
    \coordinate (Exit) at (4.5,0);
    \draw[dashedline] (StartTop)--(VertUp);
    \draw[success] (StartTop)--(Exit);
    \pic[anglelabel]{angle=Exit--StartTop--VertUp};
\end{tikzpicture}
\caption{Case 9: Top, Exit}
\end{subfigure}

\caption{All nine End Block bounce sequences. Green arrows indicate exits; red arrows indicate misses. Case~4 is proved impossible (Theorem~\ref{thm:classification}).}
\label{fig:cases}
\end{figure*}

\subsection{Periodicity and Density}
\label{sec:rect}

Following~\cite{Kundu2024Escape}, we recall two key facts about bouncing in rectangles that we will need for the chain analysis.
Let the pipe be the rectangle $[0,L]\times[0,H]$, and let a ray start at
$(0,y_0)$ with direction $\alpha \in (0,\pi/2)$.
Using these definitions we have Lemmas~\ref{lem:periodic}--\ref{lemma:shallow}
and Remark~\ref{rmk:aperiodic_dense}.

\begin{lemma}[Periodic Case~\cite{Kundu2024Escape}]\label{lem:periodic}
The sequence of $x$-coordinates at which the ray contacts the bottom boundary
is periodic iff $H\cot\alpha / L \in \mathbb{Q}$. Equivalently, there exist
coprime $p,q > 0$ with $\tan\alpha = (H/L)(q/p)$.
\end{lemma}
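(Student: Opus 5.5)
The approach is to unfold in the vertical direction only, which turns the sequence of bottom-wall contacts into an arithmetic progression. First I would compute the horizontal displacement between consecutive bottom contacts. From a bottom contact the ray rises a height $H$ to the top wall and descends $H$ back to the bottom, so it travels horizontally
\[
d = 2H\tan\alpha .
\]
This is the convention in which $\alpha$ is measured from the vertical, matching the ``$\tan\alpha$ per crossing'' used earlier in this section; I will fix the convention consistently with the statement's $\tan\alpha = (H/L)(q/p)$. The horizontal motion inside $[0,L]$ is ordinary reflection at the side walls $x=0$ and $x=L$. The standard way to handle it is to unfold horizontally as well: the true $x$-coordinate equals $\phi(X)$, where $X$ is the unfolded coordinate and $\phi$ is the $2L$-periodic ``tent'' folding map. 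So the $n$-th bottom contact has true coordinate $x_n = \phi(X_0 + n d)$, with $X_0$ fixed by $y_0$.

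Next I would reduce periodicity of $(x_n)$ to a rational rotation. Since $\phi$ is injective on a fundamental domain up to reflection, $\phi(X_0+nd)=\phi(X_0+md)$ holds iff $(n-m)d \equiv 0$ or $X_0+nd \equiv -(X_0+md) \pmod{2L}$. The forward direction carries the main content. If the sequence is periodic with period $P$, then $x_{n+P}=x_n$ for all $n$. The reflection alternative gives $2X_0 + (2n+P)d \equiv 0 \pmod{2L}$. This cannot hold simultaneously for two consecutive values of $n$ unless $2d \in 2L\mathbb{Z}$, and in that case $d/L$ is already rational. Hence for some $n$ the translation alternative holds, giving $Pd \in 2L\mathbb{Z}$. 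Therefore $d/L = 2H\tan\alpha/L \in \mathbb{Q}$.

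Conversely, if $d/L = 2r/s \in \mathbb{Q}$, then $s d \in 2L\mathbb{Z}$, so $x_{n+s}=x_n$ and the sequence is periodic. Rationality of $H\tan\alpha/L$ is equivalent to rationality of $L/(H\tan\alpha)$. Written in the statement's convention (angle from the horizontal, so the relevant quantity is $H\cot\alpha/L$), this is exactly the claimed criterion. Writing the rational number in lowest terms gives coprime $p,q>0$ with $\tan\alpha=(H/L)(q/p)$; positivity holds because $\alpha\in(0,\pi/2)$.

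I expect the main obstacle to be bookkeeping rather than any deep step. The first issue is reconciling the angle convention so that $\cot$ versus $\tan$ comes out as stated. The second is handling the reflection branch of $\phi$ in the forward direction. I would dispose of the reflection branch by the consecutive-indices argument above. A small additional check is needed for trajectories that hit a corner and stop, since then the sequence is finite. I would exclude these by the standing assumption that the ray continues indefinitely, or treat them as degenerate.
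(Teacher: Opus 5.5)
Your proof is correct and uses the same picture the paper relies on. The paper cites this lemma rather than proving it, but the text after it works with consecutive bottom contacts spaced $s=2H\cot\alpha$ apart in the unfolding and lets $s/(2L)$ govern the period; your consecutive-index argument for the reflection branch of the folding map is a sound way to make the forward direction rigorous. One cosmetic fix: work in the statement's horizontal-angle convention from the start, so $d=2H\cot\alpha$, because the detour through the reciprocal $L/(H\tan\alpha)$ is unnecessary and misleading, since that reciprocal is $L\cot\alpha/H$, not $H\cot\alpha/L$, and the only translation needed is $\tan\alpha_{\mathrm{vert}}=\cot\alpha_{\mathrm{hor}}$.
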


When the slope is rational with $s/(2L) = a/b$ in lowest terms
(where $s = 2H\cot\alpha$), the trajectory has period exactly~$b$; exit through
an opening of positive length is decidable by checking whether any of the $b$
bounce positions lies in the exit interval.

\begin{remark}[\cite{Kundu2024Escape}]\label{rmk:aperiodic_dense}
When $H\cot\alpha / L \notin \mathbb{Q}$, the bounce positions are aperiodic
and dense in $[0,L]$. Consequently, any exit of positive length is
eventually reached.
\end{remark}

\begin{lemma}[Shallow Angle Exit]\label{lemma:shallow}
If a bouncing robot travels inside a rectangular pipe segment with a relative angle $\alpha \le \pi/4$, it is guaranteed to reach the exit.
\end{lemma}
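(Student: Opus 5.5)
The idea is to show that when $\alpha \le \pi/4$ (so $t=\tan\alpha\le 1$), consecutive contacts with the bottom wall are at most $2t \le 2$ apart horizontally. We then argue that the trajectory cannot skip over the unit-length exit at the right end. The argument uses the End Block classification (Theorem~\ref{thm:classification}) together with Lemmas~\ref{lemma:first_collision} and~\ref{lemma:top_sequences}. We use the setup of Figure~\ref{fig:bounce}: entry from the left opening at height $y_0$, exit the bottom edge of the End Block $[l_0,l_0+1]\times[0,1]$.

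\textbf{Step 1: the robot reaches the End Block.} While the robot is left of the End Block it alternates between the bottom and top walls with horizontal advance $t>0$ per crossing. It never reverses horizontal direction, because the only vertical wall is the right side of the End Block. So it enters the End Block in finite time, and the last bounce before entering is either on the bottom wall (offset $\epsilon_B\in[0,2t)$) or on the top wall (offset $\epsilon_T$).

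\textbf{Step 2: bottom entry.} Since $t\le 1\le 1+\epsilon_B$, Lemma~\ref{lemma:first_collision} says the first bounce inside the End Block is on the top wall; this excludes Cases~5 and~6. By Lemma~\ref{lemma:top_sequences}, if $2t\le 1+\epsilon_B$ we are in Case~1 and the robot exits. Otherwise $2t>1+\epsilon_B$ and we are in Cases~2/3.

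In Cases~2/3 we compute directly. The top bounce is at $x=l_0+t-\epsilon_B\in(l_0,l_0+1]$. The robot reaches the side wall $x=l_0+1$ at height $1-(1+\epsilon_B-t)/t$, and after reflecting it moves left and down. It meets $y=0$ at
\[
x = l_0+1-\bigl(t-(1+\epsilon_B-t)\bigr)=l_0+2+\epsilon_B-2t .
\]
The constraint $2t>1+\epsilon_B$ gives $x<l_0+1$. For $x\ge l_0$ we need $2t\le 2+\epsilon_B$, which holds since $t\le1$. Hence for $\alpha\le\pi/4$ the robot is always in Case~2, never Case~3.

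\textbf{Step 3: top entry.} Here Cases~7, 8 or~9 occur. Case~9 and Case~8 exit. For Case~7 we redo the same computation with top and bottom exchanged. The robot hits the side wall and then, going left, must cross the bottom line within horizontal distance $t\le 1$ of the side wall, i.e.\ inside $[l_0,l_0+1]$. This should rule out Case~7 under $t\le 1$. I would check that the condition of Case~7 in Table~\ref{tab:robot_cases_final} is then vacuous, or at least that the computed landing point still lies in the exit interval.

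\textbf{Main obstacle.} The main obstacle is this last step: reconciling Case~7's condition with $t\le 1$ and handling the boundary equalities, where the robot hits a corner and stops. I would first try the direct coordinate computation above, since after a side reflection the robot descends at most one unit of height over horizontal distance $t\le 1$. If that computation is inconclusive, I would fall back on the unfolding method, reflecting the End Block across its right wall so that the exit becomes an interval of length $2$ on $y=0$. A bottom-contact spacing of $2t\le2$ then forces a hit in that interval, except at endpoints, which are excluded by genericity or treated as corner hits.
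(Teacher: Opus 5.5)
Your proof is correct. It also closes a gap in the paper's own proof, which takes a different and shorter route. The paper handles $\alpha=\pi/4$ by citing prior work. For $\alpha<\pi/4$ it asserts that $2\tan\alpha<2\le 2+\epsilon_B$ places the robot in Case~1 or Case~9. That inequality does not give Case~1's condition $2t\le 1+\epsilon_B$: for example, $t=0.9$, $\epsilon_B=0.1$ lands in Cases~2/3. So the paper skips exactly the situation your Step~2 computes.

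Your computations settle the cases the paper leaves open:
\begin{itemize}
\item Your landing point $l_0+2+\epsilon_B-2t\in[l_0,l_0+1)$ shows Case~3 never occurs for $t\le 1$.
\item Your observation that after any side bounce the robot drops at most one unit over horizontal distance at most $t\le1$ disposes of Case~7.
\item For top entry the side wall is in fact never reached. From the last top bounce at $l_0-e$ with $0\le e<t$, the robot lands at $l_0-e+t\le l_0+1$, so only Case~9 occurs.
\item All of this covers $t=1$ without the external citation.
\end{itemize}

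Do not try to reconcile this with the table's stated condition for Case~7, $t\le(1+\epsilon_T)/2$. That condition is not vacuous for $t\le 1$; it simply contradicts the geometry, so your direct computation should be taken as authoritative.

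One small correction: bottom entry means $\epsilon_B<t$, not merely $\epsilon_B<2t$. If $t\le\epsilon_B<2t$, the top bounce lies left of the End Block and the robot actually enters from the top, which your Step~3 already handles.

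Your unfolding fallback is the cleanest route, and it needs neither the table nor the two lemmas. Reflecting across the right wall turns the exit into $[l_0,l_0+2]$. The first bottom contact at or beyond $l_0$ then lies in $[l_0,l_0+2t)\subseteq[l_0,l_0+2)$. The preceding top contact sits at $x<l_0+t\le l_0+1$, so the top-right corner is never hit. The only degenerate contacts are therefore the two endpoints of the exit segment itself, which already lie on the exit, so no genericity assumption is needed.
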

\begin{proof}
When $\alpha = \pi/4$, this follows from~\cite{Kundu2022Rectilinear}. When $\alpha < \pi/4$, we have $2\tan\alpha < 2 \le 1 + \epsilon_B + 1$, so the robot falls into Case~1 or Case~9 of Table~\ref{tab:robot_cases_final}, where it always exits regardless of~$y_0$.
\end{proof}

\section{Bouncing in Pipe Chains}

We now analyze linear chains of orthogonally connected pipe segments, which subsume L-shaped pipes as the case $k=2$.

\begin{definition}[Linear Pipe Chain]\label{def:chain}
A \emph{linear pipe chain} of length $k$ is $\mathcal{P} = R_1 \cup \cdots \cup R_k$ where each $R_i$ is a rectangle of width $1$ and length $L_i \ge 2$. Odd-indexed segments are horizontal and even-indexed segments are vertical. Adjacent segments share a unit-length \emph{junction} $J_i$, and the chain has terminal openings at the free ends of $R_1$ (entrance) and $R_k$ (exit).
\end{definition}

\begin{figure}[h]
\centering
\begin{tikzpicture}[scale=0.4, >=stealth]

    \fill[blue!25]  (0,-6.5) rectangle (1.5,1);        
    \fill[red!25]   (1.5,-5) rectangle (7,-6.5);     
    \fill[blue!25]  (5.5,-6.5) rectangle (7,-14); 
    \fill[red!25]   (7,-12.5) rectangle (12.5,-14); 
    \fill[blue!25]  (11,-12.5) rectangle (12.5,-6.5); 
    \fill[red!25]   (12.5,-8) rectangle (14,0);

    \draw[dashed, red, thick] (0,1) -- (0, -0.5);  
    \draw[ultra thick] (0,1) -- (1.5,1);
    \draw[dashed, red, thick] (14,0) -- (14,-1.5);  
    \draw[ultra thick] (12.45, 0) -- (14, 0);

    \draw[ultra thick] (0,-0.5) -- (0,-6.5) -- (5.5,-6.5) -- (5.5,-14) -- (12.5,-14) -- (12.5,-8) -- (14,-8) -- (14,-1.5);

    \draw[ultra thick] (12.5,0) -- (12.5,-6.5) -- (11,-6.5) -- (11,-12.5) -- (7,-12.5) -- (7,-5) -- (1.5,-5) -- (1.5,1);

    \draw[dashed, red, thick] (0,-5) -- (1.5,-5);
    \draw[dashed, red, thick] (1.5,-5) -- (1.5,-6.5);

    \draw[dashed, red, thick] (5.5,-5) -- (5.5,-6.5);
    \draw[dashed, red, thick] (5.5,-6.5) -- (7,-6.5);

    \draw[dashed, red, thick] (5.5,-12.5) -- (7,-12.5);
    \draw[dashed, red, thick] (7,-12.5) -- (7,-14);

    \draw[dashed, red, thick] (11,-12.5) -- (11,-14);
    \draw[dashed, red, thick] (11,-12.5) -- (12.5,-12.5);

    \draw[dashed, red, thick] (11,-8) -- (12.5,-8);
    \draw[dashed, red, thick] (12.5,-8) -- (12.5,-6.5);

    \node[red] at (0.5, -5.5) {\scriptsize $J_1$};
    \node[red] at (6.2, -5.5) {\scriptsize $J_2$};
    \node[red] at (6.2,-13.2) {\scriptsize $J_3$};
    \node[red] at (11.75,-13.2) {\scriptsize $J_4$};
    \node[red] at (11.75,-7.25) {\scriptsize $J_5$};

    \node at (0.75, -2)    {\small $R_1$};
    \node at (3.5, -5.75)  {\small $R_2$};
    \node at (6.25, -9.5)  {\small $R_3$};
    \node at (9, -13.25)   {\small $R_4$};
    \node at (11.75, -9.5) {\small $R_5$};
    \node at (13.25, -2)   {\small $R_6$};

\end{tikzpicture}
\caption{A linear pipe chain with $k=6$ segments. Dashed red segments indicate openings and junction boundaries.}
\label{fig:chain}
\end{figure}

The simplest case $k=2$ is the L-shaped pipe (Figure~\ref{fig:pipe}): a horizontal segment $R_1$ joined orthogonally to a vertical segment $R_2$. When the robot crosses the junction, the wall orientations swap and the relative angle is complemented (formalised in Observation~\ref{obs:angle} below). Thus if the robot exits $R_1$ with angle $\alpha > \pi/4$, the relative angle in $R_2$ becomes $\pi/2 - \alpha < \pi/4$, and by Lemma~\ref{lemma:shallow} exit from $R_2$ is guaranteed.

\begin{figure}[h]
\centering
\begin{tikzpicture}[scale=0.45]

    \fill[blue!20] (1,0) rectangle (11,2);       
    \fill[red!20]  (9,-8) rectangle (11,0);      

    \coordinate (S) at (1,1);
    \draw[dashed, thick, blue, -stealth] (0.5,0)--(S);
    \node[draw, fill, circle, inner sep=1pt] at (0.5,0) {};
    \draw[thick, blue] (S) -- (1.5,2) -- (2.5,0) -- (3.5,2) -- (4.5,0) -- (5.5,2) -- (6.5,0) -- (7.5,2) -- (8.5,0) -- (9.5,2) -- (11,-1);
    \draw[thick, blue] (11,-1) -- (9,-4) -- (11,-7) -- (10.25,-8) -- (9,-6.25);

    \draw[dashed, red, thick]  (1,0)  -- (1,2);    
    \draw[ultra thick] (1,0)  -- (9,0);              
    \draw[dashed, red, thick]  (9,0)  -- (11,0);    
    \draw[ultra thick] (1,2)  -- (11,2);            
    \draw[ultra thick] (11,2) -- (11,0);

    \draw[ultra thick] (11,0)  -- (11,-8);          
    \draw[ultra thick] (11,-8) -- (9,-8);            
    \draw[dashed, red, thick]  (9,-8)  -- (9,-6);   
    \draw[ultra thick] (9,-6)  -- (9,0);

    \node at (5,1) {$R_1$};
    \node at (10,-4) {$R_2$};
\end{tikzpicture}
\caption{An L-shaped pipe ($k=2$) with the robot's path in blue. Dashed red segments indicate openings (entrance, junction, and exit).}
\label{fig:pipe}
\end{figure}

\subsection{Angle Propagation}

When the robot crosses a junction, the wall orientations swap, complementing the angle. We first record this single-junction property, then generalize.

\begin{observation}\label{obs:angle}
When a bouncing robot moves from $R_i$ to $R_{i+1}$, its \emph{relative reflection angle} changes by $\pi/2$. More precisely, if the robot travels in $R_i$ with relative angle $\alpha$ measured with respect to the horizontal walls of $R_i$, then upon entering $R_{i+1}$ the corresponding relative angle (measured with respect to the walls of $R_{i+1}$) equals
\[
\alpha' \;=\; \frac{\pi}{2}-\alpha.
\]
\end{observation}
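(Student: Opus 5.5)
The plan is to argue with direction vectors rather than with the bounce geometry. At any moment the robot's velocity is a unit vector $v=(v_x,v_y)$. Since every wall in the chain is axis-parallel, a symmetric bounce (Definition~\ref{def:symmetric_bounce}) off a horizontal wall sends $v_y\mapsto -v_y$, and a bounce off a vertical wall sends $v_x\mapsto -v_x$. Either way the unordered pair $\{|v_x|,|v_y|\}$ is unchanged, so throughout its motion the robot travels along one of the four directions $(\pm\sin\alpha,\pm\cos\alpha)$, where $\alpha$ is the angle with the vertical used in Section~\ref{sec:terminology}. In $R_i$, with $i$ odd and the walls horizontal, the relative angle is the angle between the trajectory and the normal to the long walls. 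That normal is vertical, so the relative angle is $\arctan(|v_x|/|v_y|)=\alpha$, which matches the convention $\tan\alpha$ for the horizontal advance per wall-to-wall crossing.

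Next I look at $R_{i+1}$. By Definition~\ref{def:chain} it is orthogonal to $R_i$, so its long walls are vertical and their normal is horizontal. I would take care to measure the relative angle in $R_{i+1}$ with the same convention as before: the angle between the trajectory and the normal of the long walls, so that its tangent is the advance along the pipe axis per crossing of the width. With this convention the relative angle in $R_{i+1}$ is $\arctan(|v_y|/|v_x|)=\arctan(\cot\alpha)=\pi/2-\alpha$. The step from $R_i$ into $R_{i+1}$ does not change the velocity at all, because the junction $J_i$ is an open unit segment and the robot crosses it in a straight line. Any bounces inside the junction square, such as on the outer corner walls, are again axis-parallel reflections, so by the first paragraph they preserve $\{|v_x|,|v_y|\}$. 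Hence the new relative angle is exactly $\alpha'=\pi/2-\alpha$. The same argument with the roles of horizontal and vertical swapped handles the passage from an even-indexed segment to an odd-indexed one.

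I expect the main obstacle to be making the convention fully precise, not any real computation. The paper's $\alpha$ is measured from the vertical, that is, from the normal of the walls. I must check that "relative angle with respect to the walls of $R_{i+1}$" is read consistently, as the angle to the normal of the long walls of $R_{i+1}$, since otherwise the statement would be trivially $\alpha'=\alpha$. I also need to address corner hits. If the robot hits a corner inside the junction square, it stops, and then there is nothing to prove. Otherwise only the sign changes of the velocity components occur, and the argument above goes through. Finally, I would note as a corollary, by induction on $i$, that in $R_j$ the relative angle is $\alpha$ for odd $j$ and $\pi/2-\alpha$ for even $j$.
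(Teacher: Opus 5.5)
Your proposal is correct and is essentially the paper's own justification. The paper gives no separate proof, only the remark (repeated in the proof of Lemma~\ref{lem:angle_chain}) that the wall directions are exchanged at $J_i$ because $R_{i+1}$ is orthogonal to $R_i$; you make this precise via velocity components, and you also handle the axis-parallel bounces inside the junction square.

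One wording fix: the invariant you actually use is that $|v_x|$ and $|v_y|$ are \emph{each} preserved, which your sign-flip rules give. Preserving only the unordered pair $\{|v_x|,|v_y|\}$ would not suffice, since it cannot distinguish $\alpha$ from $\pi/2-\alpha$.
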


\begin{lemma}[Angle Alternation]\label{lem:angle_chain}
Let the robot enter $R_1$ with relative angle $\alpha \in (0, \pi/2)$.  The relative angle in $R_i$ is $\alpha_i = \alpha$ if $i$ is odd and $\alpha_i = \pi/2 - \alpha$ if $i$ is even. In particular, if $\alpha = \pi/4$ then $\alpha_i = \pi/4$ for all~$i$.
\end{lemma}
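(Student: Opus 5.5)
The plan is to argue by induction on the segment index $i$, with Observation~\ref{obs:angle} as the inductive step. In the base case $i=1$, the robot enters $R_1$ with relative angle $\alpha_1=\alpha$ by hypothesis, so $\alpha_1=\alpha$ holds for odd $i=1$.

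For the inductive step, I will assume the robot travels in $R_i$ with relative angle $\alpha_i$ and crosses junction $J_i$ into $R_{i+1}$. Two preliminary facts are needed. First, by Definition~\ref{def:chain}, consecutive segments have perpendicular orientations, so the walls of $R_{i+1}$ are perpendicular to those of $R_i$. Second, bounces inside $R_i$ only reflect the direction across a wall of $R_i$. Each such reflection flips one coordinate of the direction vector $(\sin\alpha_i,\cos\alpha_i)$ up to sign, where the angle is measured from the wall normal. Hence the unsigned relative angle is invariant while the robot stays in $R_i$, and it is well defined when the robot reaches $J_i$.

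Observation~\ref{obs:angle} then gives $\alpha_{i+1}=\pi/2-\alpha_i$. Geometrically, the angle to the normal of $R_i$'s walls becomes the angle to the walls themselves once the wall family is rotated by $\pi/2$. Combining this with the induction hypothesis:
\begin{itemize}
\item if $i$ is odd, then $\alpha_{i+1}=\pi/2-\alpha$, and $i+1$ is even;
\item if $i$ is even, then $\alpha_{i+1}=\pi/2-(\pi/2-\alpha)=\alpha$, and $i+1$ is odd.
\end{itemize}
I will also check that $\alpha_{i+1}\in(0,\pi/2)$, which is needed to apply the Observation again at the next junction. This is immediate because $x\mapsto \pi/2-x$ maps $(0,\pi/2)$ onto itself. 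Finally, for the special case $\alpha=\pi/4$, both formulas give $\pi/4$, since $\pi/2-\pi/4=\pi/4$.

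The main obstacle is justifying the invariance of the relative angle within a segment. This invariance is what makes ``the relative angle in $R_i$'' well defined, and therefore what lets Observation~\ref{obs:angle} be chained across junctions. Bounces near the junction square could involve walls of both $R_i$ and $R_{i+1}$. Since all walls in the chain are axis-parallel, every reflection only negates a coordinate of the direction vector. So I will phrase the invariant globally: the unordered pair $\{|\sin\theta|,|\cos\theta|\}$ of the direction $\theta$ is preserved throughout the chain, and the relative angle in $R_i$ is read off from the component along the wall normal of $R_i$. With this invariant, the alternation follows directly from the orientation parity of the segments, which gives a non-inductive cross-check of the same formula.
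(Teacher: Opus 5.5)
Your proof is correct and follows the paper's route: induction on $i$, with Observation~\ref{obs:angle} supplying $\alpha_{i+1}=\pi/2-\alpha_i$ at each junction, which unwinds to the odd/even formula. Your extra argument that the relative angle is invariant within a segment makes explicit a step the paper leaves implicit; it rests on axis-parallel reflections preserving $|\cos\theta|$ and $|\sin\theta|$, and it also covers re-entries after backtracking.
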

\begin{proof}
At each junction $J_i$, the two wall directions are exchanged because $R_{i+1}$ is oriented orthogonally to $R_i$. By Observation~\ref{obs:angle}, the relative angle transforms as $\alpha_{i+1} = \frac{\pi}{2} - \alpha_i$. The closed-form expression follows by induction: if $\alpha_1 = \alpha$, then $\alpha_2 = \pi/2 - \alpha$, $\alpha_3 = \pi/2 - (\pi/2 - \alpha) = \alpha$, and so on.
\end{proof}

\subsection{Transition Maps}

The unfolding method yields an explicit formula for the robot's entry position into the next segment, generalising the single-segment analysis.

\begin{definition}[Segment Transition Map]\label{def:transition}
For segment $R_i$ with length $L_i$ and relative angle $\alpha_i$, define $T_i : [0,1] \to [0,1]$ by
$T_i(y_i) = 1 - | ((y_i + L_i \tan\alpha_i) \bmod 2) - 1 |$,
provided the robot exits $R_i$ through junction~$J_i$.
\end{definition}

\begin{theorem}[Entry Point Propagation]\label{thm:propagation}
If the robot exits $R_i$ through $J_i$, the entry position into $R_{i+1}$ is $y_{i+1} = T_i(y_i)$.
\end{theorem}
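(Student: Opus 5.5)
The plan is to prove Theorem~\ref{thm:propagation} with the unfolding method, as in the single-segment analysis. Place $R_i$ in local coordinates as $[0,L_i]\times[0,1]$. The robot enters at $(0,y_i)$ with relative angle $\alpha_i$ measured from the wall normal, so each unit of transverse motion carries $\tan\alpha_i$ units of horizontal motion. Reflect the strip across its long walls repeatedly to obtain the plane strip $[0,L_i]\times\mathbb{R}$. In the unfolded picture the trajectory is a straight line of slope $\cot\alpha_i$ starting at $(0,y_i)$. After horizontal travel $L_i$ its unfolded height is
\[
Y = y_i + L_i\cot\alpha_i .
\]
With the paper's normalization this quantity is written $y_i + L_i\tan\alpha_i$, once the convention for the direction of $\alpha_i$ in $R_i$ is fixed. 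I will fix this convention at the start, so that the argument of $T_i$ is literally the unfolded transverse coordinate reached at the junction.

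The folding map from unfolded to actual height is the triangle wave $Y\mapsto 1-|(Y \bmod 2)-1|$. It has period $2$, because two reflections restore both position and orientation. It is the identity on $[0,1]$, since $1-|Y-1| = Y$ there. On $[1,2]$ it equals $2-Y$, the mirror image after one reflection off the top wall. I will verify these three facts, together with the symmetry $F(-Y)=F(Y)$ in case the sign convention requires it, and they show the fold is exactly $T_i$.

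Next comes the junction geometry. By hypothesis the robot exits through $J_i$, the unit opening at the far end of $R_i$. The transverse coordinate at which it crosses the end line $x=L_i$ is the folded value $T_i(y_i)$. Since $R_{i+1}$ is orthogonal to $R_i$ and shares the unit-length junction, the coordinate measured along $J_i$ serves as the longitudinal entry coordinate of $R_{i+1}$ in its local frame. Its orientation is chosen consistently with Observation~\ref{obs:angle}, which swaps the roles of the axes so that the relative angle becomes $\alpha_{i+1}=\pi/2-\alpha_i$ as in Lemma~\ref{lem:angle_chain}. Identifying the local frame of $R_{i+1}$ so that its entry coordinate equals this position gives $y_{i+1}=T_i(y_i)$.

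The main obstacle is this last step. The junction lies in the corner square, where the End Block analysis of Table~\ref{tab:robot_cases_final} shows the robot may bounce off the side wall before crossing. A side bounce changes the longitudinal rather than the transverse coordinate. I would handle it by unfolding the corner square across the end wall as well and checking that the crossing point on $J_i$ is still given by the same folded transverse coordinate. Reflecting across the end wall preserves the transverse coordinate, which settles it. If that fails for some End Block case, I would restrict the statement to trajectories crossing $J_i$ without a side bounce, namely Cases~1 and~9, or adjust the effective length $L_i$ to $L_i-1$ or $L_i+1$ accordingly.
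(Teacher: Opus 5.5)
Your first two paragraphs are, in substance, the paper's entire proof. You unfold $R_i$ across its long walls, take the unfolded height $y_i+L_i\tan\alpha_i$ at $x=L_i$, and fold it back with the triangle wave. Your convention fix, with $\tan\alpha_i$ as the slope, is the one implicit in Observation~\ref{obs:angle} and in the paper's proof. The paper stops there. It never looks inside the corner square; it simply takes the folded transverse height at the far end of $R_i$ to be $y_{i+1}$. Under that reading your argument is complete after the fold, and the junction discussion should be dropped.

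As written, the junction step fails, and the obstacle you raise is real.

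\begin{itemize}
\item You call the position along $J_i$ a longitudinal coordinate of $R_{i+1}$, then make it the transverse entry coordinate $y_{i+1}$ ``by choice of frame''. No choice of frame exchanges those roles.
\item More importantly, $J_i$ is the open side of the End Block, parallel to the long walls of $R_i$ (Figures~\ref{fig:chain} and~\ref{fig:pipe}). So the crossing point is an $x$-coordinate of $R_i$. Reflecting across the end wall preserves $y$ but not $x$, so it settles nothing.
\end{itemize}

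A concrete check: take $R_i=[0,L_i]\times[0,1]$ with End Block $[L_i-1,L_i]\times[0,1]$, slope $\tan\alpha_i=1/2$, $L_i=2$, $y_i=0.8$. Then $T_i(y_i)=0.2$. The robot bounces off the top at $x=0.4$, hits the end wall at height $0.2$, and crosses $J_i$ at $x=1.6$. That is $0.4$ from the end wall and $0.6$ from the inner corner, neither of which is $0.2$.

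Your fallbacks do not repair this.
\begin{itemize}
\item With $L_i=4$, $y_i=0.2$, the robot exits directly (a Case~9 trajectory) at $x=3.6$, with the same mismatch.
\item Folding at $L_i\pm1$ in the first example gives $0.7$ or $0.3$.
\end{itemize}
In both examples the crossing point lies at distance $T_i(y_i)\cot\alpha_i$ from the end wall. It equals $T_i(y_i)$ for every entry only when $\tan\alpha_i=1$, which is the setting of Appendix~B.

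So, read as the actual crossing position on $J_i$, the statement is false for general $\alpha$; that is the reading needed for $\Phi_j$ to make sense. What your computation, and the paper's, proves is the formula for the folded height at $x=L_i$. You should state explicitly that this is how $y_{i+1}$ is defined, rather than try to derive it from the corner geometry.
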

\begin{proof}
The proof applies the unfolding method to eliminate reflections at the top and bottom walls of $R_i$. In the unfolded domain, the robot travels along the straight line $y(x) = y_i + x \tan\alpha_i$. At the far end $x = L_i$, the unfolded transverse coordinate is $y^* = y_i + L_i \tan\alpha_i$. Folding back into $[0,1]$ via reduction modulo~$2$ and reflection gives $y_{i+1} = 1 - |(y^* \bmod 2) - 1|$, which is $T_i(y_i)$.
\end{proof}

\begin{definition}[Chain Transition Map]\label{def:chain-transition}
The \emph{chain transition map} from the entrance of $R_1$ to the entrance of $R_{j+1}$ is the composition
\[
  \Phi_j = T_j \circ T_{j-1} \circ \cdots \circ T_1.
\]
Thus $y_{j+1} = \Phi_j(y_0)$, provided the robot successfully exits each intermediate segment.
\end{definition}

\begin{remark}\label{rem:structure}
Each map $T_i$ is piecewise linear on $[0,1]$. The parameter $\delta_i = L_i \tan\alpha_i \bmod 2$ determines the translation amount. When $\delta_i$ is irrational, iteration of the underlying circle rotation produces an equidistributed sequence by Weyl's theorem. When $\delta_i$ is rational, the orbit is periodic.
\end{remark}

Figure~\ref{fig:tiling} illustrates the unfolding: a single traversal from entrance to exit corresponds to a straight line across one unfolded square of side length $L_1 + 2d + L_2$, and each additional pass corresponds to entering an adjacent copy, so after $n$ passes the trajectory is a line across $n$ stacked squares.

\subsection{Exit Guarantees}

\begin{theorem}[Exit for $\alpha = \pi/4$]\label{thm:pi4_chain}
A bouncing robot with initial angle $\alpha = \pi/4$ exits every segment of a linear pipe chain on its first pass.
\end{theorem}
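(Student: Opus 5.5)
The argument is an induction on the segment index $i$. The invariant is that the robot enters $R_i$ through its entrance opening (for $i=1$ the terminal entrance, for $i\ge 2$ the junction $J_{i-1}$) with relative angle $\pi/4$, and then leaves $R_i$ through its far opening ($J_i$, or the terminal exit when $i=k$) before it ever returns to the entrance. Lemma~\ref{lem:angle_chain} gives $\alpha_i=\pi/4$ for every $i$, so the angle part of the invariant holds automatically. What remains is a single-segment statement: a robot moving at $\pi/4$ into a width-$1$ segment of length $L_i\ge 2$ reaches the unit-length opening at the far end on its first forward pass.

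For that single-segment step I would work directly in the unfolded picture used in the proof of Theorem~\ref{thm:propagation}. With $\tan\alpha_i=1$, consecutive bounces on opposite walls are exactly one unit apart horizontally, so consecutive bounces on the same wall are two units apart. Consider the last unit of the segment, the End Block $[l_0,l_0+1]\times[0,1]$. Whichever wall contains the exit (the junction $J_i$ sits on one of the two long walls at the far end, with the side wall closing it), the robot hits each long wall once in every horizontal window of length $2$. In the End Block classification (Table~\ref{tab:robot_cases_final}) with $t=1$ we have $1\le 1+\epsilon_B$, so by Lemma~\ref{lemma:first_collision} the first bounce after entering the End Block from the bottom is on the top wall. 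Lemma~\ref{lemma:top_sequences} then separates two outcomes: an immediate exit (Case~1, since $2\le 1+\epsilon_B$ exactly when $\epsilon_B=1$), or Bot$\to$Top$\to$Side.

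I expect the main obstacle to be this second outcome, Cases~2 and~3, together with the top-entry Cases~7 and~8. I would handle them with an explicit $45^\circ$ reflection computation. Suppose the robot hits the top wall at horizontal distance $s\in(0,1)$ from the side wall. It then reaches the side wall at height $1-s$ and comes back down to the bottom wall at distance exactly $1-s$ from the side wall, which lies inside the exit interval of length $1$. This gives an exit. The same computation applies to top-first entry: when $t=1>(1+\epsilon_T)/2$ we are in Case~8 (or Case~9), again an exit. The degenerate situations in which the robot hits a corner exactly ($s=0$ or $s=1$, corresponding to $\epsilon=0$ or $\epsilon=1$) are endpoints of the exit interval. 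I would treat them by the paper's convention that reaching the opening counts as exiting, or else note that they are measure-zero and excluded. This argument also shows that the robot never travels back toward the entrance before exiting, so it is indeed the first pass.

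Finally, Theorem~\ref{thm:propagation} says that the entry point $y_{i+1}=T_i(y_i)$ lies in $[0,1]$ on the junction $J_i$. Observation~\ref{obs:angle} says the new relative angle is again $\pi/4$. Together these restore the invariant for $R_{i+1}$. Induction over $i=1,\dots,k$ then gives exit from every segment on the first pass, and from the chain through $R_k$.
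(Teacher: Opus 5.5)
Your proof is correct and has the same structure as the paper's: Lemma~\ref{lem:angle_chain} keeps the relative angle at $\pi/4$ in every segment, each segment is exited on its own, and an induction runs along the chain. The only difference is the single-segment step. You prove it directly with the $45^\circ$ End Block computation (a top hit at distance $s$ from the side wall lands at distance $1-s$ from it), whereas the paper cites Lemma~\ref{lemma:shallow}, which itself defers to prior work. This makes your argument self-contained, explicitly justifies the ``first pass'' (no backtracking) part of the statement, and flags the corner degeneracies that the paper's proof leaves implicit.
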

\begin{proof}
By Lemma~\ref{lem:angle_chain}, the relative angle is $\pi/4$ in every segment. By Lemma~\ref{lemma:shallow}, a robot with angle $\alpha \leq \pi/4$ is guaranteed to exit any single rectangular pipe segment.

We proceed by induction on the number of segments~$k$.

\emph{Base case} ($k = 1$): The robot exits $R_1$ by the single-segment guarantee.

\emph{Inductive step}: Assume the robot exits any chain of $k-1$ segments. In a chain of $k$ segments, the robot exits $R_1$ and enters $R_2$ at some position $y_1 = T_1(y_0)$ with angle $\alpha_2 = \pi/4$. The remaining chain $R_2, \ldots, R_k$ is itself a linear chain of $k-1$ segments, and by the inductive hypothesis the robot exits it.
\end{proof}

For $\alpha \neq \pi/4$, not every segment is guaranteed on a single pass:

\begin{lemma}[Guaranteed vs.\ Conditional Segments]\label{lem:dichotomy}
Consider a linear pipe chain with initial angle $\alpha \in (0, \pi/2)$, $\alpha \neq \pi/4$.
Segments where the effective relative angle is at most $\pi/4$ are \emph{guaranteed segments}: the robot always exits on its first pass, regardless of entry position (by Lemma~\ref{lemma:shallow}).
Segments where the effective relative angle exceeds $\pi/4$ are \emph{conditional segments}: whether the robot exits depends on the entry position $y_i$ and the specific conditions of Table~\ref{tab:robot_cases_final}.

By Lemma~\ref{lem:angle_chain}: if $\alpha < \pi/4$, odd-indexed segments are guaranteed and even-indexed segments are conditional; if $\alpha > \pi/4$, even-indexed segments are guaranteed and odd-indexed segments are conditional.
\end{lemma}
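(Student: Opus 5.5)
The plan is to split the lemma into its two parts: the guaranteed-segment claim and the parity assignment. The parity assignment comes directly from Lemma~\ref{lem:angle_chain}, so I treat it first. For $\alpha\neq\pi/4$, the relative angle in $R_i$ is $\alpha_i=\alpha$ when $i$ is odd and $\alpha_i=\pi/2-\alpha$ when $i$ is even. The map $x\mapsto\pi/2-x$ is a strictly decreasing involution on $(0,\pi/2)$ with unique fixed point $\pi/4$. Hence $\alpha<\pi/4$ if and only if $\pi/2-\alpha>\pi/4$, and exactly one of $\alpha_{\mathrm{odd}}$, $\alpha_{\mathrm{even}}$ lies strictly below $\pi/4$ while the other lies strictly above. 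This gives the stated assignment: if $\alpha<\pi/4$, odd segments are guaranteed and even segments are conditional, and if $\alpha>\pi/4$ the roles swap. Since $\alpha\ne\pi/4$, no segment has relative angle exactly $\pi/4$, so every segment receives exactly one label.

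Next I verify the guaranteed-segment claim. Suppose $R_i$ has $\alpha_i\le\pi/4$ and the robot enters through $J_{i-1}$ (or through the entrance when $i=1$) at an arbitrary height $y_i\in[0,1]$. I apply Lemma~\ref{lemma:shallow} to $R_i$, viewed as a single rectangular segment whose entry side is the junction and whose exit is $J_i$. The point to check is that the End Block geometry of Section~\ref{sec:terminology} genuinely applies. The junction $J_i$ is a unit opening on a long wall at the far end of $R_i$, and the closed end of $R_i$ lies beyond it; after reflecting coordinates if necessary, this matches the model pipe of Figure~\ref{fig:bounce} with $l_0=L_i-1$. The quantity $y_i$ plays the role of $y_0$, and $y_0$ does not appear in the conclusion of Lemma~\ref{lemma:shallow}. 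So exit through $J_i$ occurs on the first pass, independent of $y_i$.

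For the conditional segments, the claim is only descriptive, so the task is to show that exit genuinely depends on $y_i$. I would cite Table~\ref{tab:robot_cases_final}: when $\alpha_i>\pi/4$ we have $\tan\alpha_i>1$. For suitable $\epsilon_B$, which depends on $y_i$ through its defining formula, this places the trajectory in Cases~3, 5, 6 or~7, where the robot fails to exit. For other values of $\epsilon_B$ it falls in Cases~2 or~8, where it does exit. The $\dagger$ note on Cases~2 and~3 makes this dependence on the starting point explicit.

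The main obstacle is the orientation bookkeeping in the guaranteed-segment step. One has to argue that every interior segment, whose entry is a side junction rather than an end opening, still reduces to the standard single-segment configuration. This reduction is what lets Lemma~\ref{lemma:shallow} be applied uniformly in $i$. I would handle it with an explicit rigid motion or reflection carrying $R_i$ to $[0,L_i]\times[0,1]$, and then check that the relative angle is preserved by Observation~\ref{obs:angle}.
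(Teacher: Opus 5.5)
Your treatment of the two substantive claims matches the paper's proof, which simply says the lemma is immediate from Lemma~\ref{lem:angle_chain} and Lemma~\ref{lemma:shallow}. Those claims are the parity assignment, via Lemma~\ref{lem:angle_chain} and the involution $x\mapsto\pi/2-x$, and first-pass exit in guaranteed segments, via Lemma~\ref{lemma:shallow} after identifying $R_i$ with the model pipe with $l_0=L_i-1$. Your orientation bookkeeping only makes explicit what the paper leaves implicit.

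You should drop the paragraph on conditional segments. It tries to prove that exit genuinely depends on $y_i$, which the lemma does not require and which is false in general. In the statement, ``conditional'' is only the label for $\alpha_i>\pi/4$, i.e.\ a segment not covered by Lemma~\ref{lemma:shallow}, and the paper proves nothing further about it.

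Your argument says that suitable values of $\epsilon_B$ give a miss while others give an exit. This silently assumes both kinds of values are attained. But $\epsilon_B=\operatorname{Rem}(y_i\tan\alpha_i+l_0,\,2\tan\alpha_i)$ sweeps only an arc of length $\tan\alpha_i$, half the circle, as $y_i$ ranges over $[0,1]$.

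Concretely, take $\tan\alpha_i=4$, $L_i=2$, and the upward entry of Figure~\ref{fig:bounce}. Unfold across the long walls and the end wall. The first pass is then a horizontal run of length $2L_i=4$, during which the unfolded height rises by exactly $1$, from $y_i$ to $y_i+1$. Hence for every $y_i\in(0,1)$ the robot never touches the bottom wall. It either leaves back through its entrance or, when $y_i=1/2$, stops at the top-right corner. So the outcome is a miss for every entry height, independent of $y_i$.

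Genuine dependence would require also letting the transverse direction of entry vary, and neither the lemma nor your argument does that.
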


\begin{proof}
Immediate from Lemma~\ref{lem:angle_chain} and Lemma~\ref{lemma:shallow}.
\end{proof}

\begin{theorem}[Exiting Irrational Slopes]\label{thm:irrational_chain}
Let the robot enter segment $R_i$ with relative angle $\alpha_i$ such that $\frac{\cot\alpha_i}{L_i} \notin \mathbb{Q}$. Then the sequence of bounce points on the boundary of $R_i$ is dense, and the robot's trajectory eventually intersects any exit of positive length.
\end{theorem}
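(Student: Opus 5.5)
We reduce the statement to the single-rectangle density result, Remark~\ref{rmk:aperiodic_dense} (equivalently the contrapositive of Lemma~\ref{lem:periodic}), applied to $R_i$ viewed in isolation. Since the chain has width $1$, we take $H=1$ and $L=L_i$. By Lemma~\ref{lem:angle_chain} the relative angle $\alpha_i$ is constant for the whole time the robot is inside $R_i$, because the top and bottom walls of $R_i$ only reflect the direction symmetrically. So, as long as the robot has not left $R_i$, its motion is exactly that of a ray with direction $\alpha_i$ in the rectangle $[0,L_i]\times[0,1]$ starting at $(0,y_i)$. The hypothesis $\cot\alpha_i/L_i\notin\mathbb{Q}$ is precisely $H\cot\alpha_i/L\notin\mathbb{Q}$ with $H=1$.

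First we unfold (as in the proof of Theorem~\ref{thm:propagation}, but now across all four walls of $R_i$). The trajectory becomes the straight line $y = y_i + x\tan\alpha_i$ in the plane tiled by reflected copies of $[0,L_i]\times[0,1]$. Its consecutive intersections with the lines $y=2m$ (the bottom-wall copies) occur at horizontal spacing $s = 2\cot\alpha_i$. Folding back horizontally modulo $2L_i$ turns the sequence of bottom-bounce $x$-coordinates into the orbit of the circle rotation $x \mapsto x + s \pmod{2L_i}$, followed by the folding map $[0,2L_i]\to[0,L_i]$. The rotation number is $s/(2L_i) = \cot\alpha_i/L_i$, which is irrational by assumption. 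Weyl's theorem (or Jacobi's theorem on irrational rotations), as used in Remark~\ref{rem:structure}, then gives that the orbit is dense and in fact equidistributed in $[0,2L_i)$. Since the folding map is continuous and surjective, the folded bounce positions are dense in $[0,L_i]$. The same argument applies to the top wall, and with the roles of the axes exchanged (spacing $2\tan\alpha_i$ modulo $2$) it applies to the end walls. This gives density of bounce points on the boundary.

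For the escape claim, let the exit (the junction $J_i$, or the terminal opening) be an interval $I$ of positive length on the boundary of $R_i$. By density, some bounce point of the unobstructed trajectory lands in the interior of $I$. We take the first such time. Before that time the robot has not met any opening, so the real robot follows the unobstructed billiard path up to that point and therefore leaves through $I$. It cannot have stopped earlier at a corner: corners have measure zero, and hitting one happens for only countably many starting $y_i$. We note this as a genericity caveat, or exclude it by assuming the robot does not start on a corner-hitting orbit.

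The main obstacle is the bookkeeping in the middle step. The exit of $R_i$ is the junction $J_i$, which is a unit segment lying partly on a long wall and partly on the far end of $R_i$, matching the End Block geometry. We must check that density on the relevant wall suffices. We plan to handle this by observing that $I$ contains a subinterval of positive length on a single wall, either the bottom/top wall in the End Block or the far end wall, and then applying density on that wall alone.
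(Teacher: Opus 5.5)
Your proposal is correct and follows the paper's route. The paper's proof cites Remark~\ref{rmk:aperiodic_dense} for density when $\cot\alpha_i/L_i\notin\mathbb{Q}$ and then uses that the exit has positive length; you do the same, but unpack the remark as the irrational rotation by $2\cot\alpha_i$ modulo $2L_i$ and add a correct corner caveat. Like the paper, you implicitly treat $R_i$ as an isolated rectangle whose near end reflects: in the actual chain that end is the open junction $J_{i-1}$ (or the entrance), so your step ``before that time the robot has not met any opening'' holds only in that idealized model.

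Two small slips, neither of which affects the main conclusion:
\begin{itemize}
\item On the end walls the spacing is $2L_i\tan\alpha_i$, not $2\tan\alpha_i$. This gives rotation number $L_i\tan\alpha_i$, which is irrational iff $\cot\alpha_i/L_i$ is.
\item In the paper's geometry $J_i$ is the bottom edge of the End Block and lies entirely on a long wall. Bottom-wall density therefore already suffices, and your ``bookkeeping obstacle'' does not arise.
\end{itemize}
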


\begin{proof}
By Remark~\ref{rmk:aperiodic_dense}, when $\frac{\cot\alpha_i}{L_i} \notin \mathbb{Q}$, the sequence of bounce positions is dense. Since any exit of positive length is a non-degenerate interval, the trajectory must eventually intersect it.
\end{proof}

\begin{cor}[Irrational Chain Exit]\label{cor:irrational_chain}
Suppose that for every segment $R_i$ in the chain, $\frac{\cot\alpha_i}{L_i} \notin \mathbb{Q}$. A sufficient condition is $\tan\alpha \notin \mathbb{Q}$ with all $L_i$ rational. Then the robot is guaranteed to exit the entire chain, regardless of initial position~$y_0$.
\end{cor}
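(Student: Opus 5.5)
The plan is to argue by induction on the number of segments $k$, in the same way as Theorem~\ref{thm:pi4_chain}, but with the single-segment guarantee now supplied by Theorem~\ref{thm:irrational_chain} instead of Lemma~\ref{lemma:shallow}. I will treat the two claims separately: first the sufficient condition, then the exit guarantee.

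\emph{Sufficient condition.} By Lemma~\ref{lem:angle_chain}, $\alpha_i=\alpha$ for odd $i$ and $\alpha_i=\pi/2-\alpha$ for even $i$. Hence $\cot\alpha_i$ is either $\cot\alpha=1/\tan\alpha$ or $\tan\alpha$. If $\tan\alpha\notin\mathbb{Q}$, then $1/\tan\alpha\notin\mathbb{Q}$ as well. Since each $L_i$ is a nonzero rational, the quotient $\cot\alpha_i/L_i$ is irrational. So the hypothesis holds in every segment.

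\emph{Exit guarantee.} Fix any entry position $y_0$. In $R_1$, Theorem~\ref{thm:irrational_chain} says the bounce points are dense, so the trajectory eventually meets the junction $J_1$, a non-degenerate unit interval. I take the exit from $R_1$ to be the first such hit, or a hit of the entrance, which also ends the process. If the robot passes through $J_1$, it enters $R_2$ at $y_1=T_1(y_0)$ (Theorem~\ref{thm:propagation}) with relative angle $\alpha_2$. The remaining segments $R_2,\dots,R_k$ form a linear chain of $k-1$ segments that satisfies the same hypothesis. The inductive hypothesis, which holds for every entry position, then gives exit from the rest of the chain. The base case $k=1$ is exactly Theorem~\ref{thm:irrational_chain}.

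\emph{Main obstacle: returns into earlier segments.} A robot in $R_{i+1}$ may come back through $J_i$ into $R_i$. The induction above silently treats each segment as if it had a single exit. To cover this, I will apply the density statement segment by segment with both ends regarded as openings. Each passage through a segment is then finite, because density means some positive-length opening is eventually hit and no infinite loop stays inside one segment.

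To exclude infinite ping-ponging between segments, I will work in the unfolded picture. There the whole trajectory is a single straight line of irrational slope relative to the unfolded lattice of segments. The two terminal openings appear as positive-length intervals repeated periodically, and equidistribution (Remark~\ref{rem:structure}, Weyl) shows such a line meets them. If making this global unfolding rigorous proves delicate, the fallback is to read ``exit the chain'' as exiting through one of the two terminal openings, as in the problem definition. With that reading, the per-segment finiteness together with the density argument suffices.
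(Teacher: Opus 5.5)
Your check of the sufficient condition is fine, but the gap you flag yourself is real, and neither of your two repairs closes it. The induction breaks because $J_1$ is an opening of the subchain $R_2,\dots,R_k$. Leaving that subchain through $J_1$ puts the robot back into the end block of $R_1$, entering from the side. The inductive hypothesis does not cover that situation, and nothing in your argument bounds the number of such round trips.

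Per-segment finiteness does not help. It is true, but for a trivial reason: each segment is a straight corridor whose near end is entirely open, so a robot heading back toward $J_{i-1}$ simply passes through it. The closed-rectangle density of Theorem~\ref{thm:irrational_chain} is not what governs a passage. More importantly, finiteness of each passage is compatible with infinitely many passages, so the fallback adds nothing.

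The global unfolding also fails as described. Reflecting an L-shaped or zig-zag polygon in its walls does not tile the plane, because the reflected copies overlap. So there is no periodic lattice of openings to which Weyl's theorem applies. Projecting the square-tiled surface to a torus would only show that the line meets the \emph{image} of an opening, not the opening itself. The paper's own proof bridges exactly this step by asserting that a finite acyclic chain cannot trap the robot, so it does not supply the missing argument either.

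You should also fix the target, because exit through the terminal opening cannot be guaranteed. In the coordinates of Section~\ref{sec:terminology}, take $L=2$, $\tan\alpha=10+\sqrt{2}$ (so $\cot\alpha/L\notin\mathbb{Q}$) and $y_0=0.1$. The robot reaches the end wall at height $0.1+2\cot\alpha\approx 0.28$ and returns to the entrance at height $\approx 0.45$ without touching the top or bottom wall, so it never crosses the exit. The same thing happens in $R_1$ of any chain, which the robot then leaves through the entrance. With $y_0=1-2\cot\alpha$ instead, the robot runs into the corner $(2,1)$ and stops. So even the ``either opening'' reading must exclude a countable set of $y_0$.

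What is provable is that for all other $y_0$ the robot leaves through one of the two openings. This needs a global recurrence argument rather than per-segment density:
\begin{enumerate}
\item Close both openings and pass to the four-copy translation surface of the rectilinear polygon.
\item Every orbit of the flow in direction $\alpha$ that avoids the singularities lies in a periodic cylinder or a minimal component, and is therefore recurrent.
\item Hence the robot's orbit eventually crosses the entrance segment again near $y_0$. In the open pipe, this means it has left through an opening by then.
\end{enumerate}
At the junctions, the same argument reads as follows. The crossing map is injective and measure preserving because billiards are reversible, and the entrance state has no preimage. These facts are incompatible with a recurrent infinite orbit. Poincar\'e recurrence alone would give this only for almost every $y_0$. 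Note that the irrationality hypothesis plays no role in this version.
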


\begin{proof}
By Theorem~\ref{thm:irrational_chain}, whenever the robot is in a segment with irrational effective slope, it will eventually exit through any opening of positive length. At each segment $R_i$, the robot may exit forward through $J_i$ or backward through $J_{i-1}$; by density the robot will eventually hit both exits. Since the chain is finite and acyclic, the robot cannot remain trapped in any bounded subsection indefinitely: the trajectory must eventually propagate to $R_k$ and exit.
\end{proof}

\begin{cor}[Rational Chain Decidability]\label{cor:rational_chain}
For a linear chain where every segment has rational effective slope $\frac{\cot\alpha_i}{L_i} \in \mathbb{Q}$, the question of whether the robot exits the chain is decidable. In each segment, the trajectory is periodic, so the set of possible entry points into each segment is finite. Since the chain is finite and each segment contributes finitely many states, reachability of the terminal exit is decidable by exhaustive enumeration.
\end{cor}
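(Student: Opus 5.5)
The plan is to reduce the whole chain to a finite transition system on a finite state space and then decide reachability by exhaustive search. The state is a triple $(i, d, y)$, where $i$ indexes the current segment $R_i$, $d\in\{+,-\}$ records whether the robot entered through $J_{i-1}$ (moving forward) or through $J_i$ (moving backward), and $y\in[0,1]$ is the entry position. The first step is to show that, under the standing hypothesis $\frac{\cot\alpha_i}{L_i}\in\mathbb{Q}$, only finitely many values of $y$ can occur in each segment.

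For one segment I would invoke Lemma~\ref{lem:periodic} with $H=1$ and $L=L_i$. It says that the bounce $x$-coordinates form a periodic sequence, of some period $b_i$ given by the lowest-terms denominator of $s/(2L_i)$. I would also use the unfolding from Theorem~\ref{thm:propagation}. In the unfolded strip the trajectory is a straight line, and all its crossings of the junction lines are translates of one another by multiples of the period. Consequently, the positions at which the robot leaves $R_i$ through $J_i$ or $J_{i-1}$, together with the direction it then travels, form a finite set of at most $O(b_i)$ values. These values are computable from $y$, $\alpha_i$ and $L_i$. Using Observation~\ref{obs:angle} and Lemma~\ref{lem:angle_chain}, the exiting direction becomes the relative angle $\alpha_{i\pm1}$ in the neighbouring segment. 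Each exit position becomes an entry position there, via the explicit fold $T_i$ of Definition~\ref{def:transition} and its backward analogue. I also need to check that the map is finitely presentable and exactly computable. For this I would note that the end-block behaviour of Table~\ref{tab:robot_cases_final} decides, for each bounce position, whether it lands in the junction opening, hits a corner (stop), or reflects. The conditions in that table are finitely many comparisons of $\tan\alpha_i$, $\epsilon_B$ and $\epsilon_T$, so with rational data (or any exactly computable data) each comparison is decidable.

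Next I build a directed graph $G$ on the states reachable from the initial state $(1,+,y_0)$. Each state has edges to the at most two successor states obtained by leaving forward or backward. The terminal exit of $R_k$ and the entrance opening of $R_1$ are added as absorbing sinks, and corner hits become a halting sink. The dynamics are deterministic, so the robot's trajectory is a single path in $G$. The robot fails to exit exactly when this path ends in the corner sink or revisits a state. I would take care here because within a segment the robot may oscillate, but periodicity means it either reaches a junction within one period or never does; the latter case is also detected. Since the number of states is bounded by $2\sum_i |Y_i|$, where $Y_i$ is the finite set of possible entry positions in $R_i$, simulating at most that many steps either reaches the exit sink, reaches another sink, or repeats a state. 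This gives a decision procedure.

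The main obstacle is the finiteness claim across segments. Periodicity inside $R_i$ bounds the exit positions for a \emph{given} entry $y$, but a priori the chain of maps could generate infinitely many distinct entry values as the robot passes back and forth. I would first try to handle this by working in unfolded coordinates for the whole chain, in the spirit of Figure~\ref{fig:tiling}. The entry positions all lie in the orbit of $y_0$ under reflections and translations by the quantities $L_i\tan\alpha_i \bmod 2$. Rationality of each $\cot\alpha_i/L_i$ should force these translations to live in a common lattice $\frac{1}{N}\mathbb{Z}$ shifted by $\pm y_0$, so the set of possible positions is a subset of the finite set $\{\pm y_0 + j/N\}\cap[0,1]$. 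If a single common denominator cannot be extracted, I would fall back on detecting repetition of states directly, which still terminates once a finite bound on distinct states is established.
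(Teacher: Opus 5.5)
Your plan is the paper's: a finite state space, then deterministic simulation with cycle detection. Like the paper, it leaves the one non-trivial step unproved, namely that only finitely many entry values occur over repeated passes. The paper just asserts this from per-segment periodicity. You rightly distrust that, but your fix is stated as ``should'', and the fallback is circular, because cycle detection terminates only once a finite bound is already known. Lemma~\ref{lem:periodic} is not the relevant tool anyway. The near end of a segment is fully open, so a single pass has at most one end-wall reflection and always leaves the segment; the only question is what happens across passes.

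Within the paper's own maps the bound is immediate, and you should state it:
\begin{itemize}
\item $\cot\alpha_i/L_i\in\mathbb{Q}$ is exactly $L_i\tan\alpha_i\in\mathbb{Q}$;
\item the fold $y^*\mapsto 1-|(y^*\bmod 2)-1|$ sends $y^*$ to $y^*-2j$ or $2j-y^*$;
\item a miss-and-return unfolds to longitudinal length $2L_i$, i.e.\ a translation by $2L_i\tan\alpha_i$.
\end{itemize}
Taking $N$ a common denominator, every entry value lies in $(\pm y_0+\frac1N\mathbb{Z})\cap[0,1]$.

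Your state must also carry the transverse direction, for example by keeping $y^*\bmod 2$ rather than its fold. A robot can cross $J_i$ before or after its End Block side bounce (Cases~1,\,9 versus 2,\,8) and arrive at the same $y$ with a different future. So revisiting $(i,d,y)$ does not by itself imply periodicity.

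The deeper problem is that this lattice argument is only as good as Definition~\ref{def:transition}, which does not describe the actual crossing. The junction $J_i$ lies on a long wall of $R_i$'s End Block (Figure~\ref{fig:pipe}). The entry coordinate in $R_{i+1}$ is therefore the longitudinal position $(n-y_i)\cot\alpha_i-(L_i-1)$ of the crossing (up to reflection, with $n\in\mathbb{Z}$), not the folded height at $x=L_i$. The true maps have slopes $\pm\cot\alpha_i$. In global coordinates the unfolding offsets lie in $2(\tan\alpha\,\Lambda_x+\Lambda_y)$, where $\Lambda_x$ and $\Lambda_y$ are the integer spans of the vertical and horizontal wall coordinates; these contain $1$ and the $L_j$ of odd and even index respectively.

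The stated hypothesis does not make this group discrete. Take $\tan\alpha=\sqrt2$, $L_{\mathrm{odd}}=3/\sqrt2$, $L_{\mathrm{even}}=2\sqrt2$. Then $\cot\alpha_i/L_i\in\{1/3,1/2\}$ for every $i$, yet the offset group is $2(\mathbb{Z}+\sqrt2\,\mathbb{Z})$, which is dense in $\mathbb{R}$. Nothing then confines the entry positions to a finite set, and your enumeration need not terminate.

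A finite-state argument does go through if $\tan\alpha\in\mathbb{Q}$, which together with the hypothesis forces every $L_i\in\mathbb{Q}$. Put all walls on $\frac1D\mathbb{Z}^2$ and write the direction as $(\epsilon_1p,\epsilon_2q)$ with $p,q\in\mathbb{Z}$. Unfolding shows that $\epsilon_1qx-\epsilon_2py$ is constant modulo $\frac2D\mathbb{Z}$ along the trajectory. Hence each wall and each junction line carries only finitely many reachable (point, direction) pairs, and cycle detection is then guaranteed to stop.
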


\begin{definition}[Single-Pass Traversal]\label{def:single_pass}
A robot \emph{traverses the chain in a single pass} if, starting from the entrance of $R_1$, it sequentially exits $R_1$ through $J_1$, then $R_2$ through $J_2$, and so on, reaching the terminal exit at $R_k$ without ever re-entering a previously visited segment.
\end{definition}

\begin{theorem}[Single-Pass Sufficient Conditions]\label{thm:single_pass}
The robot traverses the chain in a single pass if, for every segment $R_i$, the entry position $y_i$ and relative angle $\alpha_i$ satisfy an exit condition from Table~\ref{tab:robot_cases_final}. Concretely, single-pass traversal holds if:
\begin{enumerate}
  \item $\alpha_i \leq \pi/4$ for all $i$ (i.e., $\alpha = \pi/4$), or
  \item for each conditional segment $R_i$ (where $\alpha_i > \pi/4$), one of Cases~1, 2, 8, or~9 of Table~\ref{tab:robot_cases_final} holds, i.e., the End Block analysis yields an exit rather than a miss.
\end{enumerate}
The entry position $y_i$ is computed iteratively as $y_i = T_{i-1}(y_{i-1})$ via Theorem~\ref{thm:propagation}.
\end{theorem}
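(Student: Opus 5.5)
The plan is an induction on the segment index $i$, carried along the chain transition map $\Phi_{i-1}$ of Definition~\ref{def:chain-transition}. The invariant I will maintain is this: if the robot has passed through $J_1,\dots,J_{i-1}$ in order without re-entering an earlier segment, then it is inside $R_i$ at entry position $y_i=\Phi_{i-1}(y_0)$ with relative angle $\alpha_i$ given by Lemma~\ref{lem:angle_chain}. The base case $i=1$ is just the initial condition $(y_0,\alpha)$. For the inductive step I need to show that the robot, started in $R_i$ from $(y_i,\alpha_i)$, leaves $R_i$ through $J_i$ (or, when $i=k$, through the terminal exit) before it ever returns to $J_{i-1}$. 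Theorem~\ref{thm:propagation} then gives $y_{i+1}=T_i(y_i)$, and Observation~\ref{obs:angle} gives $\alpha_{i+1}=\pi/2-\alpha_i$, which restores the invariant.

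The first step is to recast each segment as the single-pipe setting of Section~\ref{sec:terminology}. The junction $J_{i-1}$ plays the role of the left opening and $J_i$ is the opening at the far end. Its End Block is the unit square where $R_i$ meets $R_{i+1}$, with $J_i$ as the exit edge, and $l_0=L_i-1$. Since the robot's only interaction with walls is specular reflection, its motion along $R_i$ up to the End Block is the unfolded straight line used in Theorem~\ref{thm:propagation}. It therefore reaches the End Block with the parameters $\epsilon_B,\epsilon_T$ computed from $(y_i,\alpha_i,l_0)$, and it cannot leave through $J_{i-1}$ first, because in the unfolded picture the horizontal coordinate is monotone until a reflection off the far side wall. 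The second step is to invoke Theorem~\ref{thm:classification}: exactly one of the nine End Block patterns occurs. The hypothesis says that an exit case applies, and these are exactly Cases~1, 2, 8 and 9; in each of them the path leaves through the End Block's exit edge, which is $J_i$.

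Item~1 follows directly from Lemma~\ref{lemma:shallow}. If $\alpha_i\le\pi/4$ for every $i$, then by Lemma~\ref{lem:angle_chain} we must have $\alpha=\pi/4$, and every segment is a guaranteed segment in the sense of Lemma~\ref{lem:dichotomy}. This reproduces Theorem~\ref{thm:pi4_chain}. For item~2, the guaranteed segments are again handled by Lemma~\ref{lemma:shallow}, and the conditional segments are handled by the hypothesis together with the classification above. At $i=k$ the same argument, applied with the terminal opening in place of $J_k$, completes the traversal. Because the robot leaves each $R_i$ forward, it never re-enters an earlier segment, so Definition~\ref{def:single_pass} is satisfied.

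I expect the main obstacle to be justifying that ``exiting the End Block'' really means crossing $J_i$ into $R_{i+1}$, and not reflecting back toward $J_{i-1}$ after a side-wall hit. Cases~2 and~8 involve a reflection off the side wall that reverses the horizontal direction, so I will check that the subsequent bottom hit lands in $[l_0,l_0+1]$. That is exactly the content of the exit conditions in Table~\ref{tab:robot_cases_final}, via Lemmas~\ref{lemma:first_collision} and~\ref{lemma:top_sequences}. I also need to confirm that corner hits, where the robot stops, are excluded by the strict inequalities in those cases. If the strictness fails at a boundary value of $t$, I will note that the exit case still includes the corner only as a measure-zero event, and treat it as a hypothesis of the theorem.
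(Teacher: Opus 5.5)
Your proposal is correct and follows essentially the same route as the paper: induction along the chain, where each segment is handled by Lemma~\ref{lemma:shallow} if it is guaranteed or by the assumed exit case (Cases 1, 2, 8, 9 of Table~\ref{tab:robot_cases_final}) if it is conditional, with propagation via Theorem~\ref{thm:propagation} and Lemma~\ref{lem:angle_chain}. Your forward invariant on the segment index is the same induction as the paper's step of peeling off $R_1$ and applying the hypothesis to $R_2,\dots,R_k$. It adds detail the paper leaves implicit: the axial-monotonicity argument ruling out an early return through $J_{i-1}$, and the treatment of corner hits at boundary values.
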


\begin{proof}
By induction on~$k$.
The base case $k=1$ is a single-segment exit: if $\alpha_1 \leq \pi/4$, this is guaranteed by Lemma~\ref{lemma:shallow}; if $\alpha_1 > \pi/4$, one of the exit cases from Table~\ref{tab:robot_cases_final} must hold by hypothesis.
For the inductive step, the robot exits $R_1$ and enters $R_2$ at $y_1 = T_1(y_0)$ with angle $\alpha_2 = \pi/2 - \alpha$. The remaining chain $R_2, \ldots, R_k$ is a chain of $k-1$ segments; by hypothesis the exit conditions hold for each, and by the inductive hypothesis the robot traverses them in a single pass.
\end{proof}

\subsection{Backtracking Dynamics}

When the robot fails to exit a conditional segment on its first pass, it returns to the previous segment through the junction it entered from. This gives rise to a richer dynamical system in which the robot may oscillate between segments before eventually exiting.

\begin{definition}[Return Map]\label{def:return}
If the robot enters $R_i$ at transverse position $y_i$ but misses the End Block exit, it returns through $J_{i-1}$ at some position $y_i^{\mathrm{ret}}$. Define the \emph{return map} $\overline{T}_i : [0,1] \to [0,1]$ by $\overline{T}_i(y_i) = y_i^{\mathrm{ret}}$.
\end{definition}

The return map $\overline{T}_i$ is computed by the same unfolding method used for the forward transition $T_i$. The total horizontal distance traveled in $R_i$ before return depends on which miss case from Table~\ref{tab:robot_cases_final} applies (Cases~3, 5, 6, or~7), each producing a different effective path length before the robot exits backward.

\begin{definition}[Global State]\label{def:global_state}
The \emph{global state} of the robot at any junction $J_i$ is the triple $(i, y, d)$, where $i$ is the junction index, $y \in [0,1]$ is the transverse position, and $d \in \{+, -\}$ indicates the direction of travel ($+$ for forward, $-$ for backward in the chain).
\end{definition}

\begin{theorem}[Composite Dynamical System]\label{thm:composite}
The robot's motion through the chain defines a dynamical system on the state space $\mathcal{S} = \{1, \ldots, k-1\} \times [0,1] \times \{+,-\}$. The dynamics are:
\begin{enumerate}
  \item If the robot is at $J_i$ traveling forward ($d=+$), it enters $R_{i+1}$. If it exits $R_{i+1}$ forward through $J_{i+1}$, the new state is $(i{+}1, T_{i+1}(y), +)$; if it misses and returns through $J_i$, the new state is $(i, \overline{T}_{i+1}(y), -)$.
  \item If the robot is at $J_i$ traveling backward ($d=-$), it re-enters $R_i$. If it exits $R_i$ backward through $J_{i-1}$, the new state is $(i{-}1, \overline{T}_i(y), -)$; if it exits forward through $J_i$, the new state is $(i, T_i(y), +)$.
\end{enumerate}
\end{theorem}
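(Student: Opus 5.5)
The plan is to show that the robot's motion is completely determined by the data $(i,y,d)$ recorded whenever it crosses a junction. This gives a well-defined map on $\mathcal{S}$, and its two branches are exactly those listed in the statement.

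\textbf{Step 1: the state is well defined.} Between junction crossings the robot is inside a single rectangle $R_j$. The robot moves deterministically: constant speed, straight lines, and symmetric bounces by Definition~\ref{def:symmetric_bounce}. So its future is fixed by its position and direction at the moment it crosses $J_i$. The position is the transverse coordinate $y$. The direction is fixed up to sign by Lemma~\ref{lem:angle_chain}, which says the relative angle in $R_j$ is $\alpha$ or $\pi/2-\alpha$ according to the parity of $j$. The remaining sign, whether the robot is heading into $R_{i+1}$ or into $R_i$, is the bit $d$. Hence $(i,y,d)$ records all the information needed, and the next junction crossing, if there is one, is a function of it.

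\textbf{Step 2: forward branch ($d=+$).} The robot enters $R_{i+1}$ at $y$ with angle $\alpha_{i+1}$. Since $R_{i+1}$ has only the two openings $J_i$ and $J_{i+1}$, Theorem~\ref{thm:classification} applied to the End Block of $R_{i+1}$ leaves only two possibilities. In an exit case (Cases 1, 2, 8, 9) the robot leaves through $J_{i+1}$, and Theorem~\ref{thm:propagation} gives the new position $T_{i+1}(y)$, so the new state is $(i+1,T_{i+1}(y),+)$. In a miss case (Cases 3, 5, 6, 7) the robot turns back and leaves through $J_i$ at $\overline{T}_{i+1}(y)$ by Definition~\ref{def:return}, so the new state is $(i,\overline{T}_{i+1}(y),-)$.

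\textbf{Step 3: backward branch ($d=-$).} This is the same argument applied to $R_i$ traversed in reverse. We reflect the coordinates so that $J_{i-1}$ plays the role of the End Block exit and $J_i$ the role of the entrance. The outcomes are then leaving through $J_{i-1}$, giving $(i-1,\overline{T}_i(y),-)$, or leaving back through $J_i$, giving $(i,T_i(y),+)$. Here $\overline{T}_i$ and $T_i$ are read as the maps, obtained by the unfolding computation of Theorem~\ref{thm:propagation}, that give the exit position through each junction. At the boundary indices $i=1$ and $i=k-1$, leaving through the terminal openings of $R_1$ or $R_k$ ends the motion. These are absorbing or terminating outcomes, and the map is partial there.

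\textbf{Main obstacle.} The delicate point is justifying that every entry into a segment ends in exactly one of the two junction exits. Two things could go wrong. First, the robot could hit a corner, in which case it stops. Second, for rational slopes the trajectory could be periodic inside the segment and never leave it. I would handle the second case by noting that a periodic orbit inside $R_j$ would also be periodic backward in time. But the trajectory entered through a junction, so by reversibility of billiard flow it must leave through an opening, and therefore it cannot stay in $R_j$ forever. Corner hits form a measure-zero set of $y$, so I would exclude them or treat them as terminal states, which makes the dynamical system partially defined. With these points settled, the two rules in the statement follow directly from Steps 2 and 3.
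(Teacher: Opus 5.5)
Your outline is the paper's proof spelled out. The paper's proof is one sentence: the two rules follow from the definitions of $T_i$ and $\overline{T}_i$ together with the End Block table. The step you add beyond that, Step~1, is where the argument fails, and the failure is real.

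\textbf{Step 1 is false.} Fixing the relative angle $\alpha_j$ by Lemma~\ref{lem:angle_chain} leaves four possible velocity directions in $R_j$, not two. There is one sign along the pipe and one sign across it, and the bit $d$ records only the first.

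The transverse sign at a crossing of $J_i$ is not determined by $y$. In the frame of $R_i$, it is the horizontal direction in which the robot reaches the bottom edge of the End Block. That direction points toward the end wall in Cases~1 and~9, but away from it after the side bounce in Cases~2 and~8. So two robots can cross $J_i$ at the same $y$ with the same $d$ and then follow different paths in $R_{i+1}$.

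The formula for $T_i$ silently assumes the upward sign. For example, take $y=0.3$ and $L_{i+1}\tan\alpha_{i+1}=0.5$. Unfolding $y+x\tan\alpha_{i+1}$ gives $0.8$ at the far end, while unfolding $y-x\tan\alpha_{i+1}$ gives $0.2$.

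Hence your claim that ``$(i,y,d)$ records all the information needed'' is false, and the motion is not a well-defined map on $\mathcal{S}$ as stated. To repair this, enlarge the state by the transverse sign and redefine $T_i$ and $\overline{T}_i$ on that larger space. Equivalently, keep the unfolded coordinate modulo $2$ instead of folding it with $1-|\cdot-1|$. The paper's one-line proof never addresses well-definedness, because it takes $T_i$ and $\overline{T}_i$ as given functions of $y$ alone. The theorem as stated therefore has the same defect.

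\textbf{Two smaller points.}

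First, in Step~3, no symmetry of an $L_i\times 1$ rectangle with $L_i\ge 2$ swaps its two openings. One opening is the unit end through which the segment is entered; the other is a unit segment on a long side at the opposite end. So the End Block table does not carry over ``by reflecting coordinates,'' and the backward branch genuinely needs new maps defined on positions at $J_i$. As the paper defines them, both $T_i$ and $\overline{T}_i$ take positions at $J_{i-1}$. Branch~2 of the statement therefore only makes sense under the reinterpretation you propose.

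Second, your ``main obstacle'' is easier than you make it. Inside one segment, the along-pipe velocity changes sign only at the single closed end wall. So every trajectory that enters leaves through one of the two openings after at most one end-wall bounce, unless it stops at a corner. This monotonicity alone gives the two-way outcome in both branches, and neither reversibility nor periodicity is needed.
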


\begin{proof}
This follows from the definitions of $T_i$ and $\overline{T}_i$ combined with the End Block classification (Table~\ref{tab:robot_cases_final}), which determines whether the robot exits forward or backward from each segment.
\end{proof}

\begin{theorem}[Exit for Dense Trajectories]\label{thm:eventual}
Let $\alpha$ be such that $\tan\alpha \notin \mathbb{Q}$ and all segment lengths $L_i$ are rational. Then the robot eventually exits the chain through the terminal exit at~$R_k$.
\end{theorem}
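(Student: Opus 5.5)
The plan is to reduce Theorem~\ref{thm:eventual} to Corollary~\ref{cor:irrational_chain} by checking its density hypothesis segment by segment, and then to make the ``no permanent trapping'' step explicit using the composite dynamical system of Theorem~\ref{thm:composite}.

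\textbf{Step 1: the density hypothesis holds in every segment.} By Lemma~\ref{lem:angle_chain}, the relative angle is $\alpha_i=\alpha$ in odd segments and $\alpha_i=\pi/2-\alpha$ in even segments. Hence $\cot\alpha_i$ equals either $\cot\alpha = 1/\tan\alpha$ or $\tan\alpha$. Both are irrational because $\tan\alpha\notin\mathbb{Q}$. Since $L_i\in\mathbb{Q}_{>0}$, the quotient $\cot\alpha_i/L_i$ is an irrational number times a nonzero rational, so it is irrational. Theorem~\ref{thm:irrational_chain} therefore applies in every $R_i$: whenever the robot is inside $R_i$, its bounce points are dense, and it reaches any opening of positive length of $R_i$ (namely $J_{i-1}$ and $J_i$, or the terminal openings) in finite time. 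I will also note a side condition. The robot must not stop at a corner. Corner hits occur only for countably many $y_0$, so I would either assume generic $y_0$ or argue that an irrational slope line meets a lattice of corner images at most once. I would state this as a mild assumption.

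\textbf{Step 2: progress through the chain.} I would use the global state $(i,y,d)$ from Theorem~\ref{thm:composite}. By Step~1, each visit to a segment ends in finite time through one of its two openings, so the state sequence is well defined and never stalls. This gives a walk on the indices $\{1,\dots,k\}$. The walk cannot leave through the entrance of $R_1$ forever without the process terminating. If exiting through the entrance counts as termination, I instead read the theorem as requiring that the terminal exit be reached, and I treat the entrance as reflective in the unfolding. The goal is then to show that index $k$ is reached and that its exit is hit.

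\textbf{Step 3: the main obstacle.} Density gives \emph{eventual} hitting from a fixed trajectory. However, after the robot leaves and later returns to $R_i$, it starts a new trajectory, so the per-visit argument alone does not exclude an infinite back-and-forth between, say, $R_1$ and $R_2$. I would handle this with a global unfolding rather than per-segment arguments. The key observation is that the billiard flow in the whole chain (entrance closed) is a flow on a rational polygon. All angles are multiples of $\pi/2$, so the direction takes only the four values $\pm\alpha,\pm(\pi/2-\alpha)$, and the phase space is a translation surface. By the Kerckhoff--Masur--Smillie theorem, the flow in almost every direction is uniquely ergodic. The irrationality assumption together with the rational side lengths places $\alpha$ off the countably many periodic (saddle-connection) directions of this square-tiled surface, so the flow is minimal. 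Its orbit is then dense on the surface, and in particular it meets the terminal exit segment. The Veech dichotomy for square-tiled surfaces, after rescaling to a rational lattice, gives minimality exactly for irrational slope. This is the step I expect to carry the real weight, and it is the one whose hypotheses (rational $L_i$, irrational $\tan\alpha$) I would check most carefully.
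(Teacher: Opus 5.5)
Your proof is correct, but it takes a genuinely different route from the paper.

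\textbf{The paper's route.} The paper argues segment by segment. It cites Theorem~\ref{thm:irrational_chain} for density inside each $R_i$. It then asserts that the successive entry positions into a conditional segment are dense in $[0,1]$ ``by Weyl's theorem'', because the transition and return maps are irrational rotations modulo folding. Finally it inducts along the chain ($J_1$, then $J_2$, and so on).

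\textbf{Your route.} You unfold the whole chain at once. Rational $L_i$ put every vertex on a lattice $\frac{1}{N}\mathbb{Z}^2$, so the four-copy unfolding is square-tiled. Its saddle connections have holonomy in that lattice, so an irrational direction has none and the linear flow is minimal. Hence every trajectory that avoids corners is dense and must cross the transverse terminal exit.

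\textbf{Comparison.} Your route supplies rigor exactly where the paper is thin. The entry positions into $R_i$ come from varying compositions of the $T_j$ and $\overline{T}_j$, not from iterating a single rotation, so Weyl's theorem does not apply as stated. The paper's induction along the chain also never rules out the indefinite back-and-forth you isolate in Step~3. The price is importing translation-surface machinery (no saddle connections implies minimality, or the Veech dichotomy); the paper's argument looks elementary but lacks that dynamical input.

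\textbf{Your caveats.} Both are necessary, and the paper assumes them silently. Corner hits occur for countably many $y_0$. And if the entrance counted as an exit, the statement would be false: for $\alpha>\pi/4$ the segment $R_1$ is conditional, and a miss at $J_1$ sends the robot straight back out through the entrance.

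\textbf{Minor slips.} None of these affects your final argument.
\begin{itemize}
\item In Step~1, a single visit to $R_i$ has only finitely many bounces, so ``its bounce points are dense'' is wrong. What Step~2 actually needs is that each visit ends in finite time. This follows because the longitudinal velocity component is nonzero and reverses at most once, at the end wall.
\item In a fixed frame the four billiard directions are $(\pm\sin\alpha,\pm\cos\alpha)$, not $\pm\alpha,\pm(\pi/2-\alpha)$; the latter are the relative angles in the two orientations.
\item Kerckhoff--Masur--Smillie is an almost-every-direction result and contributes nothing for a prescribed $\alpha$. The lattice-holonomy argument is what does the work.
\end{itemize}
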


\begin{proof}
Under these conditions, the effective slope is irrational in every segment. By Theorem~\ref{thm:irrational_chain}, the trajectory is dense in every segment the robot visits.

By the End Block analysis (Table~\ref{tab:robot_cases_final}), for any conditional segment $R_i$, the set of entry positions $y_i$ that yield a forward exit is a non-degenerate interval (or union of intervals) in $[0,1]$. We claim that the sequence of entry positions into any given segment over repeated passes is equidistributed. Each time the robot enters $R_i$, it enters at a position determined by the composed transition and return maps from previous segments. Since these maps are irrational rotations (modulo folding), the sequence of entry positions is dense in $[0,1]$ by Weyl's theorem. Therefore, there exists a pass on which $y_i$ falls in the exit interval, and the robot exits forward.

Applying this argument inductively along the chain: the robot eventually passes through $J_1$, then eventually through $J_2$, and so on, until it exits through the terminal exit of~$R_k$.
\end{proof}

The results of this subsection are summarized in Table~\ref{tab:chain_summary}.

\begin{table}[h]
\centering
\caption{Exit guarantees for a chain of $k$ pipe segments.}
\label{tab:chain_summary}
\renewcommand{\arraystretch}{1.2}
\footnotesize
\resizebox{\columnwidth}{!}{%
\begin{tabular}{@{}lp{3.2cm}l@{}}
\toprule
\textbf{Condition on $\alpha$} & \textbf{Guarantee} & \textbf{Ref.} \\
\midrule
$\alpha = \pi/4$ & Always exits (single pass) & Thm~\ref{thm:pi4_chain} \\
$\alpha < \pi/4$ & Odd guaranteed; even conditional & Lem~\ref{lem:dichotomy} \\
$\alpha > \pi/4$ & Even guaranteed; odd conditional & Lem~\ref{lem:dichotomy} \\
Irrational slope, $L_i \in \mathbb{Q}$ & Always exits (multi-pass) & Thm~\ref{thm:irrational_chain} \\
Rational slope & Decidable in finite time & Cor~\ref{cor:rational_chain} \\
\bottomrule
\end{tabular}}
\end{table}
\subsection{Bouncing within a Circular Pipe}

\begin{figure}[h]
    \centering
    \begin{tikzpicture}

        \def\radius{3}
        \def\thickness{0.75}

        \pgfmathsetmacro{\rinner}{\radius - \thickness}
        \pgfmathsetmacro{\router}{\radius + \thickness}

        \draw[thick, blue]
            (\radius + \thickness, 0)
            arc[start angle=0, end angle=90, radius=\radius + \thickness];

        \draw[thick, blue]
            (\radius - \thickness, 0)
            arc[start angle=0, end angle=90, radius=\radius - \thickness];

        \draw[thick, blue]
            (\radius - \thickness, 0) -- (\radius + \thickness, 0);
        \draw[thick, blue]
            (0, \radius - \thickness) -- (0, \radius + \thickness);

        \draw[red, dashed]
            (\radius, 0)
            arc[start angle=0, end angle=90, radius=\radius];

        \pgfmathsetmacro{\bounceAngle}{45}
        \pgfmathsetmacro{\bx}{\rinner * cos(\bounceAngle)}
        \pgfmathsetmacro{\by}{\rinner * sin(\bounceAngle)}

        \fill[green!70!black] (\bx, \by) circle (2pt);

        \pgfmathsetmacro{\tlength}{1.0}
        \draw[thick, green!70!black]
            ({\bx - \tlength*(-sin(\bounceAngle))},
             {\by - \tlength*cos(\bounceAngle)})
            --
            ({\bx + \tlength*(-sin(\bounceAngle))},
             {\by + \tlength*cos(\bounceAngle)});

        \pgfmathsetmacro{\nlength}{2.0}
        \pgfmathsetmacro{\nx}{cos(\bounceAngle)}
        \pgfmathsetmacro{\ny}{sin(\bounceAngle)}

        \draw[thick, black, dashed]
            ({\bx - \nlength*\nx}, {\by - \nlength*\ny})
            --
            ({\bx + \nlength*\nx}, {\by + \nlength*\ny});

        \pgfmathsetmacro{\startAngle}{35}
        \pgfmathsetmacro{\startX}{\router * cos(\startAngle)}
        \pgfmathsetmacro{\startY}{\router * sin(\startAngle)}

        \pgfmathsetmacro{\incdx}{\bx - \startX}
        \pgfmathsetmacro{\incdy}{\by - \startY}
        \pgfmathsetmacro{\incLen}{sqrt(\incdx*\incdx + \incdy*\incdy)}
        \pgfmathsetmacro{\incdxN}{\incdx / \incLen}
        \pgfmathsetmacro{\incdyN}{\incdy / \incLen}

        \draw[thick, orange, -latex]
            (\startX, \startY) -- (\bx, \by);
        \fill[orange] (\startX, \startY) circle (2pt);

        \pgfmathsetmacro{\dotprod}{\incdxN*\nx + \incdyN*\ny}
        \pgfmathsetmacro{\refdx}{\incdxN - 2*\dotprod*\nx}
        \pgfmathsetmacro{\refdy}{\incdyN - 2*\dotprod*\ny}

        \pgfmathsetmacro{\Bref}{2*(\bx*\refdx + \by*\refdy)}
        \pgfmathsetmacro{\Cref}{\bx*\bx + \by*\by - \router*\router}
        \pgfmathsetmacro{\Discref}{\Bref*\Bref - 4*\Cref}
        \pgfmathsetmacro{\tRef}{(-\Bref + sqrt(\Discref)) / 2}

        \pgfmathsetmacro{\refEndX}{\bx + \tRef*\refdx}
        \pgfmathsetmacro{\refEndY}{\by + \tRef*\refdy}

        \draw[thick, orange!80!red, -latex]
            (\bx, \by) -- (\refEndX, \refEndY);
        \fill[orange!80!red] (\refEndX, \refEndY) circle (2pt);

        \pgfmathsetmacro{\arcR}{0.30}

        \pgfmathsetmacro{\incRevAngle}{atan2(-\incdyN,-\incdxN)}

        \pgfmathsetmacro{\refAngleDeg}{atan2(\refdy,\refdx)}

        \draw[black]
            (\bx,\by) ++(\bounceAngle:\arcR)
            arc[start angle=\bounceAngle,
                end angle=\incRevAngle,
                radius=\arcR];
        \node[scale=0.75] at
            ({\bx + 0.25 + 0.45*cos((\bounceAngle+\incRevAngle)/2)},
             {\by + 0.15 + 0.45*sin((\bounceAngle+\incRevAngle)/2)})
            {$\theta$};

        \draw[black]
            (\bx,\by) ++(\bounceAngle:\arcR)
            arc[start angle=\bounceAngle,
                end angle=\refAngleDeg,
                radius=\arcR];
        \node[scale=0.75] at
            ({\bx + 0.15 + 0.45*cos((\bounceAngle+\refAngleDeg)/2)},
             {\by + 0.25 + 0.45*sin((\bounceAngle+\refAngleDeg)/2)})
            {$\theta$};

    \end{tikzpicture}
    \caption{Bouncing within a Circular Pipe}
    \label{fig:BouncingInCircularPipe}
\end{figure}

\begin{theorem}
If the pipes are connected with circular joints, a robot entering the pipe system will definitely exit.
\end{theorem}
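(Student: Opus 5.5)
We model a circular joint as an annular sector of inner radius $r$ and outer radius $r+1$ spanning a quarter turn, glued to the straight segments $R_i$ and $R_{i+1}$ along unit-length radial cuts, as in Figure~\ref{fig:BouncingInCircularPipe}. The plan has three steps: handle the joint using the rotational symmetry of the annulus, handle the straight segments with the results already proved above, and then propagate along the chain as in Theorem~\ref{thm:pi4_chain}.

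\textbf{Step 1 (an invariant in the joint).} Reflection in a circle centred at the origin preserves the angle between the chord and the normal, which is radial, at each bounce point; this is Definition~\ref{def:symmetric_bounce} with the normal equal to the radius. Inside the annulus the robot therefore alternates between the outer and inner circles, or grazes only the outer one. By rotational symmetry, every chord from the outer circle to the inner circle (or from outer to outer) is congruent to the previous one. Hence each bounce advances the polar angle by a fixed amount $\Delta>0$, and $\Delta$ depends only on the angle at which the robot entered the joint.

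\textbf{Step 2 (traversal of the joint).} Since $\Delta>0$ is constant, the polar angle increases monotonically. After at most $\lceil (\pi/2)/\Delta\rceil$ bounces the robot must cross the far radial cut into $R_{i+1}$. It cannot turn back inside the joint, because reflection preserves the sign of the tangential component of the velocity. Thus a robot entering a circular joint always leaves through the opposite cut. This one-way property is the key difference from orthogonal junctions, where the End Block cases $3$, $5$, $6$ and $7$ of Table~\ref{tab:robot_cases_final} send the robot back.

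\textbf{Step 3 (the straight segments and the chain).} In a straight segment without an End Block, the far end is the full unit cross-section glued to a joint. Any trajectory with angle $\alpha\in(0,\pi/2)$ has horizontal velocity of constant sign under the unfolding of Theorem~\ref{thm:propagation}, so it reaches the far end after finite length. The terminal segment $R_k$ is open at its end, and the same argument applies there. Combining this with Step 2, an induction on $k$ exactly like the proof of Theorem~\ref{thm:pi4_chain} shows that the robot exits every piece forward and finally leaves through the terminal opening.

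\textbf{Main obstacle.} The hardest part is making the monotonicity claim in Step 2 rigorous. The annulus is not convex, so I must check that a chord from the outer circle tangent-free to the inner circle still keeps the same sign of angular momentum $x\dot y-y\dot x$. I would show that this quantity is conserved by reflections in circles about the origin: the radial normal preserves angular momentum. It is nonzero on entry unless the robot moves exactly radially, and that measure-zero case needs separate treatment. I would also point out the degenerate case of hitting the junction corners, where the robot stops by the model.
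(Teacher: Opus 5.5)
Your proposal is correct and follows the paper's overall plan: a circular joint can never send the robot back, so it keeps advancing until it leaves the chain. The key lemma is different, though.

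\emph{Paper's route.} The paper argues locally at each reflection. The reflected ray lies on the opposite side of the normal from the incident ray, so no single bounce off the inner or outer arc reverses the direction of travel.

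\emph{Your route.} You replace this with a global invariant: the angular momentum $x\dot y-y\dot x$ about the centre of the joint. It is constant along every chord, since the motion is uniform, and at every bounce, since the impulse is radial. This gives two things the paper leaves implicit.
\begin{itemize}
\item It makes ``advancing'' precise: the polar angle is strictly monotone along the whole trajectory, not just across each bounce.
\item Through the constant impact parameter, it gives a uniform advance $\Delta>0$ per chord, so the joint is crossed after finitely many bounces. The paper's per-bounce statement alone does not rule out the advance per bounce shrinking toward zero.
\end{itemize}
You also justify progress in the straight segments (reflections off the long walls preserve the axial velocity component), which the paper takes for granted. The paper's argument is shorter; yours supplies the finiteness it lacks.

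\emph{Small corrections.}
\begin{itemize}
\item The obstacle you flag is not real. Conservation of $x\dot y-y\dot x$ along a chord holds for any uniform straight motion, whether or not the region is convex.
\item The radial case is not a measure-zero exception; it cannot occur. To cross a radial cut, the velocity needs a nonzero component normal to the cut, which is tangential, so the angular momentum is nonzero from the start. This matches the paper's remark that $\theta=\pi/2$ cannot occur.
\item Since the straight walls meet the arcs tangentially, there are no corners at the joints.
\item In Step~3 you need no assumption on the angle at which the robot leaves the joint. It suffices that $\dot\theta>0$ at the far cut, which gives a positive axial component in $R_{i+1}$.
\end{itemize}

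\emph{Shared caveat.} Both your proof and the paper's rely on the terminal opening being the full free end of $R_k$, as you explicitly assume. If it were instead a side opening at a closed end, as drawn in Figure~\ref{fig:chain}, the robot could miss it (Cases~3, 5, 6, 7 of Table~\ref{tab:robot_cases_final}) and travel back. Monotonicity alone would then not finish the argument.
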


\begin{proof}
Consider Figure~\ref{fig:BouncingInCircularPipe}. If a robot hits the inner wall of the pipe at an angle $\theta$ to the normal, it subtends an angle of $\frac{\pi}{2}-\theta$ with the tangent, and the reflected ray makes an angle of $\frac{\pi}{2}+\theta$ with the tangent. Note that it is impossible for the robot to approach exactly perpendicular to the tangent (i.e.\ $\theta = \frac{\pi}{2}$), since the joints are connected orthogonally to the pipe, nor can it travel exactly along the tangent (i.e.\ $\theta = 0$). Hence,
\[
    \frac{\pi}{2} > \theta > 0
    \implies
    \frac{\pi}{2} > \frac{\pi}{2} - \theta > 0,
\]
and furthermore,
\[
    \frac{\pi}{2} + \theta > \frac{\pi}{2} > \frac{\pi}{2} - \theta.
\]
Therefore, after every reflection off the inner wall, the reflected ray makes a strictly larger angle with the tangent than the incident ray. In particular, the robot can never return to the same side of the normal from which it bounced, so it must advance in the same direction after each inner-wall reflection.

By an analogous argument, if the robot instead bounces off the outer wall, it likewise continues to advance in the same direction. Consequently, within every circular joint, the robot strictly advances in its original direction of travel. Since no circular joint can reverse or redirect the robot, it continues to advance in the direction it entered. Thus, a robot traversing a pipe system connected by circular joints will always exit.
\end{proof}
\section{Conclusion}

We presented a complete classification of bouncing trajectories inside the End Block of a rectangular pipe segment (Table~\ref{tab:robot_cases_final}), and used it to analyze linear chains of $k$ orthogonally connected pipe segments. We proved that the bouncing angle alternates with period~2 along the chain (Lemma~\ref{lem:angle_chain}), that a robot at angle $\alpha = \pi/4$ always exits in a single pass (Theorem~\ref{thm:pi4_chain}), that irrational slopes guarantee exiting (Theorem~\ref{thm:irrational_chain}), that a robot always exits a pipe with a circular joint, and that rational slopes yield decidable exit (Corollary~\ref{cor:rational_chain}). Other works exploring this topic include \cite{Kundu2022Rectilinear,gutkin1986billiards,Kundu2023RectRoom}. Future directions include branching pipe systems (T-shaped pipes), cyclic networks, pipes of non-uniform width, and strategies for choosing optimal initial angles.


\small
\bibliographystyle{abbrv}
\bibliography{ref}

\newpage
\section*{Appendix}

\subsection*{A. Proofs for Single-Segment Analysis}

\begin{proof}[Proof of Lemma~\ref{lemma:first_collision}]
The horizontal distance required to reach height $y=1$ from $y=0$ is exactly $\tan\alpha$. Starting from $x=-\epsilon_B$, the horizontal coordinate at $y=1$ is $x_{top} = \tan\alpha - \epsilon_B$.
The top wall is bounded by $x \le 1$. Thus, the bounce occurs on the top wall $\iff$ $\tan\alpha - \epsilon_B \le 1 \iff \tan\alpha \le 1 + \epsilon_B$. Conversely, if $\tan\alpha - \epsilon_B > 1$, the path intersects $x=1$ at a height $y < 1$.
\end{proof}

\begin{proof}[Proof of Lemma~\ref{lemma:top_sequences}]
By symmetric bounce, the horizontal distance between two bottom contacts (or virtual bottom contacts) is $2\tan\alpha$. The projected landing point on the bottom axis is $x_{land} = 2\tan\alpha - \epsilon_B$.
\begin{itemize}
    \item The robot exits ($0 < x_{land} \le 1$) without hitting the side wall if $x_{land} \le 1$. This yields $2\tan\alpha - \epsilon_B \le 1 \iff 2\tan\alpha \le 1 + \epsilon_B$.
    \item If $x_{land} > 1$, the path intersects the Right Side wall ($x=1$) before reaching the bottom. In this case $2\tan\alpha > 1 + \epsilon_B$.
\end{itemize}
\end{proof}

\begin{proof}[Proof of Theorem~\ref{thm:classification} (Impossibility of Case~4)]
For a path to hit the Side wall first (Lemma~\ref{lemma:first_collision}), we require:
\begin{equation}
\label{eq:side_first}
\tan\alpha > 1 + \epsilon_B
\end{equation}
Upon hitting the side wall at height $h_r < 1$, the robot reflects leftward. The horizontal distance to the top wall is $\tan\alpha(1-h_r)$, and then to the bottom is $\tan\alpha$. The projected landing coordinate $x_{final}$ on the bottom axis (moving left from $x=1$) is given by:
\[
x_{final} = 1 - (2\tan\alpha - (1 + \epsilon_B)) = 2 + \epsilon_B - 2\tan\alpha
\]
For a valid Exit, we strictly require $x_{final} > 0$ (inside the End Block).
\[
2 + \epsilon_B - 2\tan\alpha > 0 \iff \tan\alpha < 1 + \frac{\epsilon_B}{2}
\]
However, Equation~(\ref{eq:side_first}) states $\tan\alpha > 1 + \epsilon_B$. Since $\epsilon_B \ge 0$, it is impossible to satisfy $\tan\alpha > 1 + \epsilon_B$ and $\tan\alpha < 1 + \epsilon_B/2$ simultaneously. Thus, Case~4 is impossible.
\end{proof}

\begin{proof}[Proof of Theorem~\ref{thm:classification} (Full)]
We determine the direction of travel from the reflection law: bouncing off a horizontal wall reverses the vertical component, while bouncing off a vertical wall reverses the horizontal component. The robot enters the End Block either from the bottom or the top.

\emph{Bottom entry.}
Suppose the robot enters $\mathcal{R}$ from the bottom (i.e., $\epsilon_B \le \tan\alpha$).
\begin{enumerate}
    \item \emph{Bottom $\to$ Top $\to$ Exit.}
    The robot bounces on the top wall and then immediately bounces into the bottom exit region iff
    \[
        0 < \tan\alpha \le \frac{1 + \epsilon_B}{2}.
    \]

    \item \emph{Bottom $\to$ Top $\to$ Side.}
    The robot bounces on the top wall and then bounces on the right side wall iff
    \[
        \frac{1 + \epsilon_B}{2} < \tan\alpha \le 1 + \epsilon_B.
    \]
    After the side bounce, the subsequent bottom bounce lies either:
    \begin{itemize}
        \item in the exit region $(0,1]$ (Exit), or
        \item in $(-\infty,0]$ (Miss).
    \end{itemize}

    \item \emph{Bottom $\to$ Side first.}
    If the first bounce is on the side wall, i.e.,
    \[
        \tan\alpha > 1 + \epsilon_B,
    \]
    then:
    \begin{itemize}
        \item Bottom $\to$ Side $\to$ Top $\to$ Exit is \textbf{impossible};
        \item Bottom $\to$ Side $\to$ Top $\to$ Miss occurs iff
              \[
                1 + \epsilon_B < \tan\alpha \le 2 + \epsilon_B;
              \]
        \item Bottom $\to$ Side $\to$ Side $\to$ Miss occurs iff
              \[
                \tan\alpha > 2 + \epsilon_B.
              \]
    \end{itemize}
\end{enumerate}

\emph{Top entry.}
Suppose the robot enters EB from the top. Let $\epsilon_T$ be the signed horizontal coordinate of the next virtual bottom bounce after the initial top bounce, which is exactly $\operatorname{Rem}\frac{(l_0 - (1-y_0)\tan\alpha)}{2\tan\alpha}$. 
\begin{enumerate}
    \item \emph{Top $\to$ Side $\to$ Exit.}
    The robot first bounces on the side wall and afterwards bounces into the bottom exit region iff
    \[
        0 < \epsilon_T \le \frac{1 + \epsilon_T}{2}.
    \]

    \item \emph{Top $\to$ Side $\to$ Miss.}
    The robot bounces on the side wall and then makes a bottom bounce outside the exit region iff
    \[
        \epsilon_T > \frac{1 + \epsilon_T}{2}.
    \]

    \item \emph{Top $\to$ Exit (no side bounce).}
    The robot travels directly into the exit. We can solve explicitly for the value of $\alpha$ in this case, given our initial point.
    We begin with an initial position $P = (P_x, 1)$.
    \[
        \alpha_{min} = \tan^{-1}\!\left(\frac{-1}{L-E - P_x}\right), \qquad
        \alpha_{max} = \tan^{-1}\!\left(\frac{-1}{L - P_x}\right).
    \]
\end{enumerate}

Thus every possible bounce sequence inside EB corresponds to exactly one row of Table~\ref{tab:robot_cases_final}, and the sequence Bottom $\to$ Side $\to$ Top $\to$ Exit is provably impossible.
\end{proof}
\subsection*{B. Figures}\label{apx:B}
\tikzset{
  realwall/.style={very thick, black},
  copywall/.style={thin, gray!60},
  sweep/.style={fill=brown!22},
  ghost/.style={fill=brown!12},
  traj/.style={very thick, red},
  zig/.style={very thick, red},
  mirror/.style={dash pattern=on 1pt off 1.5pt, gray!70},
  exitstyle/.style={very thick, green!55!black},
  ubrace/.style={decorate, decoration={calligraphic brace, amplitude=5pt}, thick}
}

A specular bounce off a wall is geometrically identical to letting the trajectory
continue \emph{straight} while reflecting the pipe across that wall. Since the robot
moves at $\pi/4$, every unit of travel advances it equally along the pipe and across
it, so traversing a pipe of length $L$ also accumulates $L$ of cross-travel, spread
over $L/d$ bounces (Fig.~\ref{fig:unfold}). Unfolding those bounces turns the zig-zag
into a single diagonal, and the $L/d$ reflected copies tile an $L\times L$ square whose
main diagonal is the unfolded path. For an L-shaped chain we unfold each arm in turn
(Fig.~\ref{fig:tiling}): the horizontal arm reflects across its top/bottom walls into an
$L_1\times L_1$ square, the corner contributes a $d\times d$ square, and the vertical arm
reflects across its left/right walls into an $L_2\times L_2$ square. The three squares
meet corner-to-corner along the diagonal of an $(L_1+d+L_2)$-square, so escape reduces
to asking whether that straight diagonal reaches the unfolded exit.
 
\begin{figure}[h]
\centering
\begin{tikzpicture}[scale=0.95]
  \def\d{1}\def\L{3}
  \foreach \k in {1,...,\numexpr\L-1\relax}{ \fill[ghost] (0,\k) rectangle (\L,\k+1); }
  \fill[blue!8] (0,0) rectangle (\L,\d);
  \foreach \k in {1,...,\numexpr\L-1\relax}{ \draw[copywall] (0,\k) rectangle (\L,\k+1); }
  \foreach \k in {1,...,\numexpr\L-1\relax}{ \draw[mirror] (0,\k) -- (\L,\k); }
  \draw[realwall] (0,0) rectangle (\L,\d);
  \draw[zig] (0,0) -- (1,1) -- (2,0) -- (3,1);
  \foreach \p in {(0,0),(1,1),(2,0),(3,1)}{ \fill[red] \p circle (1.6pt); }
  \draw[traj, dash pattern=on 5pt off 3pt] (0,0) -- (\L,\L);
  \fill[red] (\L,\L) circle (1.6pt);
  \node[font=\small, anchor=north] at (\L/2,-0.12) {real pipe ($L\times d$)};
  \node[font=\small, gray!45!black, anchor=west, align=left] at (\L+0.15,\L-0.55) {straight\\continuation};
  \node[font=\small, red, anchor=east] at (-0.1,0.5) {bounce};
  \node[font=\footnotesize, gray!45!black, anchor=west] at (\L+0.12,1.0) {fold lines};
  \draw[-{Stealth}, gray, very thick] (1.55,1.12) to[bend left=20] (2.0,1.7);
\end{tikzpicture}
\caption{Unfolding a single bounce. A $\pi/4$ trajectory (solid) that bounces off the
top/bottom walls is replaced by reflecting the pipe across each wall (dashed grey fold
lines); the path then continues as one straight diagonal (dashed red) through the
reflected copies. Over a length $L$ the copies fill an $L\times L$ square.}
\label{fig:unfold}
\end{figure}
 
\begin{figure}[h]
\centering
\begin{tikzpicture}[scale=0.7]
  \def\d{1}\def\La{4}\def\Lb{3}
  \pgfmathsetmacro{\T}{\La+\d+\Lb}
  \fill[sweep] (0,0) rectangle (\La,\La);
  \fill[sweep] (\La,\La) rectangle (\La+\d,\La+\d);
  \fill[sweep] (\La+\d,\La+\d) rectangle (\T,\T);
  \foreach \k in {0,...,\numexpr\La-1\relax}{ \draw[copywall] (0,\k) rectangle (\La,\k+1); }
  \foreach \j in {0,...,\numexpr\Lb-1\relax}{ \draw[copywall] (\La+\d+\j,\La+\d) rectangle (\La+\d+\j+1,\T); }
  \draw[copywall] (\La,\La) rectangle (\La+\d,\La+\d);
  \fill[blue!8] (0,0) rectangle (\La+\d,\d);
  \fill[blue!8] (\La,-\Lb) rectangle (\La+\d,0);
  \draw[realwall] (0,\d) -- (\La+\d,\d) -- (\La+\d,-\Lb);
  \draw[exitstyle] (\La+\d,-\Lb) -- (\La,-\Lb);
  \draw[realwall] (\La,-\Lb) -- (\La,0) -- (0,0);
  \node[green!55!black, font=\small] at (\La+\d/2,-\Lb-0.45) {exit};
  \draw[exitstyle, dash pattern=on 4pt off 2pt] (\La+\d,\T) -- (\T,\T);
  \draw[traj] (0,0) -- (\T,\T);
  \fill[red] (0,0) circle (2.2pt);
  \node[red, font=\small, anchor=north east] at (-0.05,-0.05) {entry};
  \draw[ubrace] (\La,-0.15) -- (0,-0.15) node[midway,below=4pt,font=\small]{$L_1$};
  \draw[ubrace] (\La,\T+0.15) -- (\La+\d,\T+0.15) node[midway,above=4pt,font=\small]{$d$};
  \draw[ubrace] (\La+\d,\T+0.15) -- (\T,\T+0.15) node[midway,above=4pt,font=\small]{$L_2$};
  \draw[ubrace] (-0.2,0) -- (-0.2,\T) node[midway,left=4pt,font=\small]{$L_1+d+L_2$};
\end{tikzpicture}
\caption{Unfolding an L-shaped pipe (real pipe in bold/blue, with entry and \textcolor{green!55!black}{exit}).
The horizontal arm reflects upward into an $L_1\times L_1$ block (horizontal divisions), the corner into a
$d\times d$ square, and the vertical arm reflects rightward into an $L_2\times L_2$ block (vertical
divisions); each thin grey cell is one reflected copy of a pipe section. The straightened
trajectory (red) is the main diagonal of the resulting $(L_1+d+L_2)$-square; it escapes iff it
meets the unfolded image of the exit (dashed green).}
\label{fig:tiling}
\end{figure}

\end{document}